\newtheorem{lemma}{Lemma}
\newtheorem{theorem}{Theorem}
\newcommand{\R}{\ensuremath{\mathcal{R}}}
\newcommand{\D}{\ensuremath{\mathcal{D}}}
\newcommand{\A}{\ensuremath{\mathcal{A}}}
\newcommand{\E}{\ensuremath{\mathcal{E}}}
\newcommand{\algA}{\ensuremath{\mathcal{A}}}
\newcommand{\correct}{\mathit{correct}}
\newcommand{\faulty}{\mathit{faulty}}
\newcommand{\C}{\ensuremath{\mathsf{C}}}
\newcommand{\EC}{\ensuremath{\mathsf{EC}}}
\newcommand{\EIC}{\ensuremath{\mathsf{EIC}}}
\newcommand{\ETOB}{\ensuremath{\mathsf{ETOB}}}
\newcommand{\TOB}{\ensuremath{\mathsf{TOB}}}
\newcommand{\ignore}[1]{}
\def\Time{\mathbb{T}}
\def\fd{failure detector}
\def\to{\rightarrow}
\def\Nat{\ensuremath{\mathbb{N}}}
\def\get{\leftarrow}
\newcommand{\id}[1]{\mbox{\textit{#1}}}
\newcommand{\res}[1]{\mbox{\textbf{#1}}}
\newcommand{\btb}{\begin{tabbing}}
\newcommand{\etb}{\end{tabbing}}
\newcommand{\be}{\begin{itemize}}
\newcommand{\ee}{\end{itemize}}
\newcommand{\bn}{\begin{enumerate}}
\newcommand{\en}{\end{enumerate}}
\newcommand{\bc}{\begin{center}}
\newcommand{\ec}{\end{center}}
\newcommand{\bfg}{\begin{figure}[th]}
\newcommand{\efg}{\end{figure}}
\newcommand{\tpl}[1]{\ensuremath{\langle #1 \rangle}}
\renewcommand{\int}{\textit{int}}
\def\ioisize{\footnotesize}
\newcommand{\ioi}[3]{\bgroup\ioisize \begin{tabbing}
XX\=XX\=    \kill           
Input:\+ \\                 
  #1 \-  \\
Output: \+ \\
  #2 \- \\
Internal: \+\\
  #3 \-
\end{tabbing} \egroup} 
\newcommand{\io}[2]{\bgroup\ioisize \begin{tabbing}
XX\=XX\=    \kill           
Input:\+ \\                 
  #1 \-  \\
Output: \+ \\
  #2 \- 
\end{tabbing} \egroup} 
\newcommand{\ioti}[4]{\bgroup\ioisize \begin{tabbing}
XX\=XX\=    \kill           
Input:\+ \\                 
  #1 \-  \\
Output: \+ \\
  #2 \- \\
Time-passing: \+ \\
  #3 \- \\
Internal: \+\\
  #4 \-
\end{tabbing} \egroup} 
\newcommand{\ei}[2]{\bgroup\ioisize \begin{tabbing}
XX\=XX\=    \kill           
External:\+ \\              
  #1 \-  \\
Internal: \+\\
  #2 \-
\end{tabbing} \egroup}
\def\iocodesize{\scriptsize}
\newcommand{\iocodeonecol}[1]{\begin{center}\iocodesize
\begin{minipage}[t]{.95\linewidth}
\hbox to\linewidth{}
#1
\end{minipage}
\end{center}}
\newcommand{\iocode}[2]{\vspace{-15pt}\begin{center}\iocodesize
\begin{minipage}[t]{.475\linewidth}
\hbox to\linewidth{}
#1
\end{minipage}
\hspace{.01\linewidth} 
\begin{minipage}[t]{.475\linewidth}
\hbox to\linewidth{}
#2
\end{minipage}
\end{center}}
\newlength\Mwidth
\newlength\jsawidth
\def\statedef#1{\def\auxstatedef{#1}%
\def\auxempty{}%
\settowidth{\Mwidth}{M}
\jsawidth=\textwidth
\advance\jsawidth by -7cm
\advance\jsawidth by -\Mwidth
\advance\jsawidth by -10\tabcolsep
\advance\jsawidth by -6\arrayrulewidth
\begin{trivlist}\item[]%
\begin{tabular}{|p{2.5cm}|c|p{2.2cm}|p{2.3cm}|p{\jsawidth}|}
\hline
{\bf Variable} & \phantom{M} & {\bf Type} & {\bf Initially} & {\bf Description}\\
\hline\hline}
\def\endstatedef{\\ 
\ifx\auxstatedef\auxempty
  \hline
\else%
  \hline\hline
   \multicolumn{5}{|l|}{\auxstatedef}\\\hline
 \fi
\end{tabular}%
\end{trivlist}}
\def\statetext{\begin{minipage}[t]{\linewidth}}
\def\endstatetext{\end{minipage}\vspace{2pt}}
\renewcommand{\iocode}[3][.475]{\vspace{-15pt}\begin{center}\iocodesize
\begin{minipage}[t]{#1\linewidth}
\hbox to\linewidth{}
#2
\end{minipage}
\hspace{.01\linewidth} 
\begin{minipage}[t]{.95\linewidth-#1\linewidth}
\hbox to\linewidth{}
#3
\end{minipage}
\end{center}}
\newcommand{\MP}[1]{}
\newcommand{\rmv}[1]{}
\newcommand{\no}{\textrm{No}}
\newcommand{\OO}{\Omega\textit{-output}}
\newcounter{linenumber}
\begin{document}
\sloppy

\title{The Weakest Failure Detector for Eventual Consistency}

\author{
Swan Dubois\protect\footnote{Sorbonne Universit\'es, UPMC Universit\'e
  Paris 6, \'Equipe REGAL, LIP6, F-75005, Paris, France;
CNRS, UMR 7606, LIP6, F-75005, Paris, France;
Inria, \'Equipe-projet REGAL, F-75005, Paris, France;
firstname.lastname@lip6.fr}
\hspace{1cm}
Rachid Guerraoui\protect\footnote{\'Ecole Polytechnique F\'ed\'erale de Lausanne, Switzerland; rachid.guerraoui@epfl.ch}
\hspace{1cm}
Petr Kuznetsov\protect\footnote{T\'el\'ecom ParisTech;
   petr.kuznetsov@telecom-paristech.fr}\thanks{The research leading to these results has
  received funding from the Agence Nationale de la Recherche,  under
  grant agreement N ANR-14-CE35-0010-01, project DISCMAT.}
\\
Franck Petit\protect\footnotemark[1]
\hspace{1cm} 
Pierre Sens\protect\footnotemark[1]
}

\date{}

\maketitle

\begin{abstract}
In its classical form,  a \emph{consistent} replicated service requires
all replicas to witness the same evolution of the service state.
Assuming a message-passing environment with a majority of
correct processes, the necessary and sufficient information about
failures for implementing a general state machine replication scheme
ensuring consistency is captured by the $\Omega$ failure detector.

This paper shows that
in such a message-passing environment,  $\Omega$ is also the weakest failure detector
to implement
an \emph{eventually consistent} replicated service,
where replicas are expected to agree on the evolution of the service state only after some
(\emph{a priori} unknown) time.

In fact, we show that $\Omega$ is the weakest to implement
eventual consistency in \emph{any} message-passing environment, {\em i.e.}, under any assumption on when and
where
failures might occur.
Ensuring (strong) consistency in any environment
requires, in addition to $\Omega$,
the quorum failure detector
$\Sigma$.
Our paper thus captures, for the first time,
an exact computational difference between
building a replicated state machine that ensures consistency and
one that only ensures eventual consistency.
\end{abstract}

\section{Introduction}

State machine replication~\cite{Lam77,Sch90}  is the most studied
technique to build a highly-available and consistent distributed
service.
Roughly speaking, the idea consists in replicating the service, modeled as a state machine,
over several processes and ensuring that all replicas behave like one
correct and available state machine,
despite concurrent invocations of operations and failures of
replicas.
This is typically captured using the abstraction of a \emph{total
  order broadcast}~\cite{CT96}, where messages represent invocations
of the service operations from clients to replicas (server
processes).
Assuming that the state machine is
deterministic, delivering the invocations in the same total order
ensures indeed that the replicas behave like a single state
machine. Total order broadcast is, in turn, typically implemented by
having the processes agree on which batch of messages to execute next,
using the \emph{consensus} abstraction~\cite{Lam98,CT96}.
(The two abstractions, consensus and total order broadcast, were shown to be equivalent~\cite{CT96}.)

Replicas behaving like a single one is a
property generally called \emph{consistency}. The
sole purpose of the  abstractions underlying the state machine
replication scheme, namely consensus and total order broadcast, is precisely to ensure
this consistency, while providing at the same time \emph{availability},
namely that the replicated service does not stop responding. The inherent costs
of these abstractions are sometimes considered too high, both in
terms of
the necessary computability assumptions about the underlying
system~\cite{FLP85,CHT96,Bre00}, and
the number of communication steps needed to deliver an
invocation~\cite{Lam98,Lam06}.

An appealing approach to circumvent these costs is to trade
consistency with what is sometimes called \emph{eventual consistency}~\cite{SS05, Vogels2009}:
namely to give up the requirement that the replicas  \emph{always} look the
same, and replace it with the requirement that they only look the same
\emph{eventually}, {\em i.e.}, after a finite but not \emph{a priori} bounded
period of time.
Basically, \emph{eventual consistency} says that the replicas can
diverge for some period, as long as this period is finite.

Many systems claim to implement general state machines that ensure
eventual consistency in message-passing  systems, {\em
  e.g.},~\cite{Cassandra, DeCandia2007}.
But, to our knowledge,  there has been no theoretical study of the
exact assumptions on the information about failures underlying those implementations.
This paper is the first to do so:  using the  formalism of failure detectors~\cite{CT96,CHT96},
it addresses the question of the minimal information about failures needed 
to implement an eventually consistent replicated state machine.

It has been shown in~\cite{CHT96} that, in a message-passing environment with
a majority of correct processes, the weakest failure detector to
implement consensus (and, thus, total order broadcast~\cite{CT94}) is the \emph{eventual
  leader} failure detector, denoted $\Omega$. In short, $\Omega$ outputs, at
every process, a \emph{leader} process so that, eventually, the same
correct process is considered leader by all.
$\Omega$ can thus be viewed as the weakest failure detector to implement a
generic replicated state machine ensuring
consistency and availability in an environment with a majority of correct
processes.

We show in this paper that, maybe
surprisingly, the weakest failure detector to implement an
\emph{eventually consistent}  replicated service in this environment
(in fact, in \emph{any} environment) is still $\Omega$.
We prove our result via an interesting generalization of the
celebrated ``CHT proof'' by Chandra, Hadzilacos and Toueg~\cite{CHT96}.
In the CHT proof, every process periodically extracts the identifier of a process that is
expected to be correct (the \emph{leader}) from the \emph{valencies} of an ever-growing collection
of locally simulated runs. 
We carefully adjust the notion of valency to apply this approach to the weaker
abstraction of \emph{eventual consensus}, which we show to be necessary
and sufficient to implement eventual consistency.

Our result becomes less surprising if we
realize that a correct majority prevents the system from being
\emph{partitioned}, and we know that both consistency and availability
cannot be achieved while tolerating partitions~\cite{Bre00,GL02,DFG10}.
Therefore, in a system with a correct majority of processes, there is no gain in
weakening consistency:  (strong) consistency requires the same
information about failures as eventual one.
In an arbitrary environment, however, {\em i.e.}, under any
assumptions on when and where failures may occur,
the weakest failure detector for consistency is known to be $\Omega+\Sigma$,
where $\Sigma$~\cite{DFG10}
returns a set of processes (called a \emph{quorum}) so that every
two such quorums intersect at any time and there is a time after which
all returned quorums contain only correct processes.
We show in this paper that ensuring eventual consistency does not
require $\Sigma$: only $\Omega$ is needed,  even if we do not assume a
 majority of correct processes. Therefore, $\Sigma$
represents the exact difference
between consistency and eventual consistency.
Our result thus theoretically backs up partition-tolerance~\cite{Bre00,GL02} as one of the main motivations
behind the very notion of eventual consistency.

We establish our results through the following steps:

\begin{itemize}\itemsep0pt

\item We give precise definitions of the notions of {\em eventual
  consensus} and
{\em eventual total order broadcast}.
We show that the two abstractions are equivalent. These underlie 
the intuitive notion of eventual consistency implemented in many
replicated services~\cite{DeCandia2007,Cooper2008,Chang2008}.

\item We show how to extend the celebrated CHT proof~\cite{CHT96}, initially establishing
that $\Omega$ is necessary for solving consensus, to the context of eventual consensus.
Through this extension, we indirectly highlight a  hidden power of the
technique proposed in~\cite{CHT96} that somehow provides more
than was used in the original CHT proof.

\item We present an algorithm that uses $\Omega$ to implement, in any message-passing
  environment,  
  an eventually consistent replicated service.
The algorithm features three interesting properties:
(1) An invocation can be performed after
  the optimal number of two communication steps, even if a majority of processes is not correct and even
  during periods when processes disagree on the leader, i.e., partition
  periods; \footnote{Note that three communication steps are, in the worst case, necessary when
strong consistency is required~\cite{Lam06}. }  
(2) If $\Omega$ outputs the same leader at all
processes from the very beginning, then the algorithm implements total
order broadcast and hence ensures consistency;
(3) \emph{Causal} ordering is ensured even during periods where
$\Omega$ outputs different leaders at different processes.

\end{itemize}


The rest of the paper is organized as follows.
We present our system model and basic definitions in Section~\ref{sec:model}.
In Section~\ref{sec:defs}, we introduce abstractions for
implementing
eventual consistency: namely, eventual consensus and eventual total
order broadcast, and we prove them to be equivalent.
We show in Section~\ref{sec:WFD} that the weakest failure detector for eventual consensus in any message-passing
environment is $\Omega$.
We present in Section~\ref{sec:ETOB} our algorithm that implements eventual total order broadcast using $\Omega$ in
any environment.
Section~\ref{sec:related} discusses related
work, and Section~\ref{sec:conclu} concludes the paper.
In the optional appendix, we present some proofs omitted from the main
paper, discuss an alternative (seemingly relaxed but equivalent) definition of eventual
consensus, and recall basic steps of the CHT proof.

\section{Preliminaries}
\label{sec:model}

We  adopt the classical model of distributed systems provided with the
failure detector abstraction proposed in~\cite{CT96,CHT96}.
In particular we employ the simplified version of the model proposed in~\cite{GHKT12,JT08}.

We consider a message-passing system  with a set of
	processes $\Pi=\{p_1,p_2,\ldots,p_n\}$ ($n\ge 2$).
	Processes execute steps of computation asynchronously, {\em i.e.}, there is
	no bound on the delay between steps.
However, we assume a discrete global clock
	to which the processes do not have access.
The range of this clock's ticks is $\mathbb{N}$.
Each pair of processes are connected by a reliable 
link.

Processes may fail by \emph{crashing}.
A \emph{failure pattern} is a function $F:\mathbb{N} \to 2^\Pi$,
	where $F(t)$ is the set of processes that have crashed by time $t$.
We assume that processes never recover from crashes, \emph{i. e.}, $F(t) \subseteq F(t+1)$.
Let $\faulty(F)=\bigcup_{\substack{t\in \mathbb{N}}} F(t)$
	be the set of \emph{faulty} processes in a failure pattern $F$;
	and $\correct(F)=\Pi-\faulty(F)$ be
	the set of \emph{correct} processes in $F$.
%
An \emph{environment}, denoted $\E$, is a set of failure patterns.


A \emph{failure detector history $H$ with range $\mathcal{R}$} is a function $H:\Pi\times\mathbb{N}\to\mathcal{R}$,
	where $H(p,t)$ is interpreted as the value output by the failure detector module
	of process $p$ at time $t$.
A \emph{failure detector $\D$ with range $\mathcal{R}$} is a function
	that maps every failure pattern $F$ to
	a nonempty set of failure detector histories.
$\D(F)$ denotes the set of all possible failure detector histories
	that may be output by $\D$ in a failure pattern $F$.

For example, at each process, the \emph{leader failure detector} $\Omega$ outputs the id of a process;
	furthermore, if a correct process exists, then there is a time after which
	$\Omega$ outputs the id of the same correct process at every correct process.
Another example is the \emph{quorum failure detector} $\Sigma$, which
	outputs a set of processes at each process.
Any two sets output at any times and by any processes intersect, and
	eventually every set output at any correct process
	consists of only correct processes.

An \emph{algorithm} $\A$ is modeled as a collection of $n$
deterministic automata, where $\A(p)$ specifies the behavior of
process $p$.
Computation proceeds in \emph{steps} of these automata.
In each step, identified as a tuple $(p,m,d,\algA)$, a process $p$ atomically
(1) receives a single message $m$ (that can be the empty message
$\lambda$) or accepts an \emph{input} (from the
external world),
(2)  queries its local failure detector module
	and receives a value $d$, (3) changes its state according to
        $\A(p)$, and (4) sends a message
        specified by $\A(p)$ for the new state to every process or
        produces an \emph{output} (to the external world).
Note that the use of $\lambda$ ensures that a step of a process is
always enabled, even if no message is sent to it.  	

A \emph{configuration} of an algorithm $\algA$ specifies the local
state of each process and the set of messages in transit.
In the \emph{initial} configuration of $\algA$, no message is in
transit and each process $p$ is in the initial state of the automaton $\algA(p)$.
A \emph{schedule} $S$ of $\algA$ is a finite or infinite sequence of
steps of $\algA$ that respects $\algA(p)$  for each $p$.

Following~\cite{JT08}, we model inputs and outputs of processes using \emph{input histories}
$H_I$ and \emph{output histories} $H_O$ that specify the inputs each
process receives from its application and
the outputs each process returns to the application over time.
A \emph{run of algorithm $\algA$ using failure detector $\D$ in
  environment $\E$} is
        a tuple $R=(F,H,H_I,H_O,S,T)$, where $F$ is a failure pattern in $\E$,
        $H$ is a failure detector history in $\D(F)$,
        $H_I$ and $H_O$ are input and output histories of $\algA$,
        $S$ is a schedule of $\algA$, and
        $T$ is a list of increasing times in $\mathbb{N}$,
        where $T[i]$ is the time when step $S[i]$ is taken.
        $H\in \D(F)$, the failure detector values received by steps
        in $S$ are consistent with $H$, and $H_I$ and $H_O$ are
        consistent with $S$.
%
An infinite run of $\algA$ is \emph{admissible}
if (1) every correct process takes an infinite number of steps in $S$;
        and (2) each message sent to a correct process is eventually received.



We then define a distributed-computing \emph{problem}, such as consensus or total
order broadcast, as a set of tuples $(H_I,H_O)$ where $H_I$ is an
input history and $H_O$ is an output history.
An algorithm $\A$ using a failure detector $\D$ solves a problem $P$ in
an environment $\E$  if
in every admissible run of $\A$ in $\E$, the input and output histories
are in $P$.
Typically, inputs and outputs represent
invocations and responses of \emph{operations} exported by the
implemented abstraction.
If there is an algorithm that solves $P$ using $\D$, we sometimes,
with a slight language abuse, say that $\D$ \emph{implements} $P$.

Consider two problems $P$ and $P'$. A \emph{transformation from $P$ to
  $P'$ in an environment $\E$}~\cite{HT94}  is
  a map $T_{P\rightarrow P'}$ that, given any algorithm $\A_P$
  solving $P$ in $\E$, yields an algorithm $\A_{P'}$  solving $P'$ in
  $\E$.
The transformation is \emph{asynchronous} in the sense that $\A_{P}$
is used as a ``black box'' where $\A_{P'}$ is
obtained by feeding inputs to $\A_P$ and using the returned outputs to
solve $P'$.
Hence, if $P$ is solvable in $\E$ using a failure detector $\D$,
the existence of a transformation $T_{P\rightarrow P'}$ in $\E$ establishes
that $P'$ is also solvable in $\E$ using $\D$.
If, additionally, there exists a transformation from $P'$ to  $P$ in $\E$, we
say that $P$ and $P'$ are \emph{equivalent in $\E$}.

Failure detectors can be partially ordered based on their ``power'':
	failure detector $\D$ is \emph{weaker than}
	failure detector $\D'$ in $\E$ if there is an algorithm
	that \emph{emulates} the output of $\D$ using $\D'$ in $\E$~\cite{CHT96,JT08}.
If $\D$ is weaker than $\D'$, any problem that can be solved with
	$\D$ can also be solved with $\D'$.
For a problem $P$,
	$\D^*$ is the \emph{weakest} failure detector to solve $P$ in $\E$ if
	(a)~there is an algorithm that uses $\D^*$ to solve $P$ in $\E$, and
	(b)~$\D^*$ is weaker than any failure detector $\D$ that can
be used to solve $P$ in $\E$.

\section{Abstractions for Eventual Consistency}
\label{sec:defs}

We define two basic abstractions that capture the notion of eventual
consistency: eventual total order broadcast and eventual consensus.
We show that the two abstractions are equivalent: each of them can be
used to implement the other.

\paragraph{Eventual Total Order Broadcast (\ETOB)}
The \emph{total order broadcast} (\TOB) abstraction~\cite{HT94} exports one operation
$\textit{broadcastTOB(m)}$ and
maintains, at every process $p_i$, an output variable $d_i$.
Let $d_i(t)$ denote the value of $d_i$ at time $t$.
Intuitively, $d_i(t)$ is the sequence of messages $p_i$ \emph{delivered}
by time $t$.
We write $m\in d_i(t)$ if $m$ appears in $d_i(t)$.

A process $p_i$ \emph{broadcasts a message $m$ at time
$t$} by a call to $\textit{broadcastTOB(m)}$.
We say that a process $p_i$ \emph{stably delivers a message $m$ at time $t$}
if $p_i$ appends $m$ to $d_i(t)$ and $m$ is never removed from $d_i$ after
that, {\em i.e.},
$m\notin d_i(t-1)$ and $\forall t'\geq t$: $m\in d_i(t')$.
Note that if a message is delivered but not \emph{stably} delivered by $p_i$ at time $t$,
it appears in $d_i(t)$ but not in $d_i(t')$ for some $t'>t$.

Assuming that broadcast messages are distinct, the \TOB~abstraction satisfies:
\begin{description}\itemsep0pt
\item[\TOB-Validity] If a correct process $p_i$ broadcasts a message
  $m$ at time $t$, then $p_i$ eventually stably delivers $m$, i.e., $\forall t''\geq t':$ $m\in
  d_i(t'')$ for some $t'>t$.

\item[\TOB-No-creation] If $m\in d_i(t)$, then $m$ was broadcast by
  some process $p_j$ at some time $t'<t$.

\item[\TOB-No-duplication] No message appears more than once in
  $d_i(t)$. 

\item[\TOB-Agreement] If a message $m$ is stably delivered by some
  correct process $p_i$ at time $t$, then $m$ is eventually stably delivered by every correct process $p_j$.

\item[\TOB-Stability] For any correct process $p_i$, $d_i(t_1)$ is a prefix of $d_i(t_2)$ for all $t_1, t_2\in\mathbb{N}$, $t_1\leq t_2$.

\item[\TOB-Total-order] Let $p_i$ and $p_j$ be any two correct processes such that two messages $m_1$ and $m_2$ appear in $d_i(t)$ and $d_j(t)$ at time $t$. If $m_1$ appears before $m_2$ in $d_i(t)$, then $m_1$ appears before $m_2$ in $d_j(t)$.
\end{description}
We then introduce the \emph{eventual  total order broadcast} (\ETOB)
abstraction, which  maintains the same inputs and outputs as \TOB~(messages are broadcast by a call to $\textit{broadcastETOB(m)}$) and
satisfies, in every admissible run,
the \TOB-Validity, \TOB-No-creation, \TOB-No-duplication, and
\TOB-Agreement properties,
plus the following relaxed properties for some $\tau\in\mathbb{N}$:
\begin{description}\itemsep0pt
\item[\ETOB-Stability] For any correct process $p_i$, $d_i(t_1)$ is a prefix of $d_i(t_2)$
  for all $t_1, t_2\in\mathbb{N}$, $\tau\le t_1\leq t_2$.
\item[\ETOB-Total-order] Let  $p_i$ and $p_j$ be correct processes such that
  messages $m_1$ and $m_2$ appear in $d_i(t)$ and $d_j(t)$ for some
  $t\ge \tau$. If $m_1$ appears before $m_2$ in $d_i(t)$, then $m_1$ appears before $m_2$ in $d_j(t)$.
\end{description}
As we show in this paper, satisfying the following optional (but
useful) property in \ETOB~does not require more information about failures.
\begin{description}
\item[\TOB-Causal-Order] Let $p_i$ be a correct process such that two messages $m_1$ and $m_2$ appear in $d_i(t)$ at time $t\in\mathbb{N}$. If $m_2$ depends causally of $m_1$, then $m_1$ appears before $m_2$ in $d_i(t)$.
\end{description}

Here we say that a message $m_2$ \emph{causally depends} on a message $m_1$ in a run $R$,
and write $m_1\rightarrow_R m_2$, if one of the following conditions
holds in $R$: (1) a process $p_i$ sends $m_1$ and
then sends $m_2$, (2) a process $p_i$ receives $m_1$ and then sends
$m_2$, or (3) there exists $m_3$ such that $m_1\rightarrow_R m_3$ and
$m_3\rightarrow_R m_2$.

\paragraph{Eventual Consensus (\EC)}
The \emph{consensus} abstraction (\C)~\cite{FLP85} exports, to every process $p_i$, a
single operation $\textit{proposeC}$ that takes a
binary argument and
returns a binary response (we also say \emph{decides})  so that the following properties are
satisfied:

\begin{description}\itemsep0pt
\item[\C-Termination] Every correct process eventually returns a response to $\textit{proposeC}$.
\item[\C-Integrity] Every process returns a response at most once.
\item[\C-Agreement] No two processes return different values.
\item[\C-Validity] Every value returned was previously proposed.
\end{description}
The \emph{eventual  consensus} (\EC) abstraction exports, to
every process $p_i$, operations $\textit{proposeEC}_{1}$,
$\textit{proposeEC}_{2}$, $\ldots$ that take binary arguments and return
binary responses. Assuming that, for all
$j\in\mathbb{N}$,  every process invokes
$\textit{proposeEC}_{j}$ as soon as
it returns a response to $\textit{proposeEC}_{j-1}$, the abstraction guarantees that,
in every admissible run, there exists $k\in\mathbb{N}$, such that the following properties are satisfied:
\begin{description}\itemsep0pt
\item[\EC-Termination] Every correct process eventually returns a response to $\textit{proposeEC}_{j}$ for all $j\in\mathbb{N}$.
\item[\EC-Integrity] No process responds twice to
  $\textit{proposeEC}_{j}$ for all $j\in\mathbb{N}$.
\item[\EC-Validity] Every value returned to $\textit{proposeEC}_{j}$ was previously proposed to $\textit{proposeEC}_{j}$ for all $j\in\mathbb{N}$.
\item[\EC-Agreement] No two processes return different values to $\textit{proposeEC}_{j}$ for all $j\geq k$.
\end{description}
It is straightforward to transform the binary version of {\EC} into a
multivalued one with unbounded set of inputs~\cite{MRT00-mv}. In the following, by referring to
{\EC} we mean a multivalued version of it.

\paragraph{Equivalence between \EC~and \ETOB}
It is well known that, in their classical forms, the consensus and the total order broadcast abstractions are equivalent \cite{CT96}. In this section, we show that a similar result holds for our eventual versions of these abstractions.

The intuition behind the transformation from {\EC} to {\ETOB} is the
following.  Each time a process $p_i$ wants to ETOB-broadcast a message $m$, $p$
sends $m$ to each process.
Periodically,  every process $p_i$ proposes its current
sequence of messages received so far to  \EC.
This sequence is built by concatenating the last output of  
\EC (stored in a local variable $d_i$) to the batch of all
messages received by the process and not yet present in $d_i$.
The output of {\EC} is stored in $d_i$, i.e.,
at any time, each process delivers the last sequence of messages
returned by {\EC}.

The correctness of this transformation follows from the fact that  \EC eventually returns consistent
responses to the processes. Thus, eventually, all processes agree on
the same linearly growing sequence of stably delivered messages.
Furthermore, every message broadcast by a
correct process eventually appears either in the delivered message sequence or
in the batches of not yet delivered messages at all correct
processes. Thus, by {\EC}-Validity of {\EC}, every message
\ETOB-broadcast by a correct process is eventually stored
in $d_i$ of every correct process $p_i$ forever.
By construction, no message appears in $d_i$ twice or if it was not previously
\ETOB-broadcast.
Therefore, the
transformation satisfies the properties of {\ETOB}.

The transformation from \ETOB~to \EC~is as follows. At each invocation of the
\EC~primitive, the process broadcasts a message using the
\ETOB~abstraction. This message contains the proposed value and the
index of the consensus instance. As soon as a message corresponding
to a given eventual consensus instance is delivered by process $p_i$ (appears in $d_i$), $p_i$
returns the value contained in the message.

Since the \ETOB~abstraction guarantees that every process eventually
stably delivers the same sequence of messages, there exists a consensus instance
after which the responses of the transformation to all alive processes
are identical.
Moreover, by \ETOB-Validity, every message \ETOB-broadcast by a
correct process $p_i$ is eventually stably delivered.
Thus, every correct process eventually returns from any \EC-instance
it invokes.
Thus, the transformation satisfies the \EC~specification.

\begin{theorem}\label{th:EquivalenceECETOB}
In any environment $\mathcal{E}$, \EC~and \ETOB~are equivalent.
\end{theorem}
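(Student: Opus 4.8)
The plan is to exhibit explicit asynchronous black-box transformations in both directions, using no failure detector and no assumption on $\E$, and to verify the relevant safety and liveness properties; since nothing in the argument depends on which failure patterns $\E$ contains, this yields equivalence in \emph{every} environment. The two constructions are exactly those sketched informally above, so the work is entirely in the correctness proofs.

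\emph{From \EC{} to \ETOB.} I would have each $p_i$ keep a delivered sequence $d_i$ (initially empty) and a set of pending messages; $\textit{broadcastETOB}(m)$ sends $m$ to all processes; a received message is placed in the pending set unless it is already in $d_i$; and, repeatedly, $p_i$ invokes $\textit{proposeEC}_j$ with the sequence $d_i$ followed by its pending messages not in $d_i$ (in a fixed order), then overwrites $d_i$ with the returned value. \TOB-No-creation and \TOB-No-duplication for \ETOB{} are then immediate, because by \EC-Validity every value ever stored in a $d_i$ is one of the proposals, and every proposal is by construction a duplicate-free sequence of previously $\ETOB$-broadcast messages. Let $k$ be the index from \EC-Agreement; by \EC-Termination and finiteness of $\Pi$ there is a time $\tau$ by which every correct process has returned from instance $k$. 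From instance $k$ on, all processes that return agree on a single output, giving values $d^{(k)}, d^{(k+1)}, \dots$, and I would show by induction that $d^{(k+\ell)}$ is a prefix-extension of $d^{(k+\ell-1)}$: at the start of instance $k+\ell$ every still-alive process holds $d^{(k+\ell-1)}$, so every proposal has it as a prefix, hence by \EC-Validity so does the output. This gives \ETOB-Stability and \ETOB-Total-order with that $\tau$. Finally, a message $m$ broadcast by a correct process (or stably delivered once by a correct process) is eventually in the pending-or-delivered set of every correct process; since faulty processes take only finitely many steps, there is an instance $j \ge k$ by which all faulty processes have crashed and all correct processes hold $m$, so every proposal in instance $j$ contains $m$, hence $d^{(j)}$ does, and $m$ is never removed afterwards --- this yields \TOB-Validity and \TOB-Agreement for \ETOB.

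\emph{From \ETOB{} to \EC.} To run $\textit{proposeEC}_j(v)$, $p_i$ calls $\textit{broadcastETOB}((j,i,v))$ and returns the value carried by the first message of the form $(j,\cdot,\cdot)$ in $d_i$, taken at the first time such a message is present in $d_i$. \EC-Integrity and \EC-Validity are immediate from the construction together with \TOB-No-creation; \EC-Termination follows because, by \TOB-Validity, a correct process stably delivers its own broadcast $(j,i,v)$, so it returns from every instance it invokes, and hence invokes every instance. For \EC-Agreement, let $\tau$ be the stabilization time of the underlying \ETOB. Since each correct process invokes successive instances at strictly increasing step times while each faulty process invokes only finitely many instances, one can pick $k$ so that for all $j \ge k$ no faulty process ever invokes instance $j$ and every correct process invokes instance $j$ only after $\tau$; hence every $(j,\cdot,\cdot)$ message is broadcast, by a correct process, after $\tau$. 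Consequently any $(j,\cdot,\cdot)$ message that ever appears in a correct $d_i$ does so at a time $\ge \tau$, so it is stably delivered there (\ETOB-Stability) and hence by every correct process (\TOB-Agreement), and by \TOB-Validity this set is non-empty. Using \ETOB-Stability (positions in $d_i$ are frozen past $\tau$, new messages only appended) together with \ETOB-Total-order, all correct processes end up with these messages in the same relative order, and whenever any of them is present in a correct $d_i$ the one ranked first is present too; hence every correct process returns the value of that same message for instance $j \ge k$.

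The step I expect to be the main obstacle is the handling of the pre-stabilization behaviour of \ETOB{} in the second transformation: because messages may be delivered and later withdrawn before $\tau$, one must argue carefully that for large enough $j$ no spurious $(j,\cdot,\cdot)$ message --- in particular none produced by a fast faulty process before crashing --- can ever reach a correct process's output, and that the "first $(j,\cdot,\cdot)$ message seen" is the same at all correct processes. This is exactly why $k$ must be chosen to dominate both the last instance ever invoked by any faulty process and the indices of all instances invoked by any correct process before $\tau$. The symmetric point in the first transformation --- that the common \EC{} outputs form a prefix chain only from instance $k$ onwards, so $\tau$ cannot be taken to be $0$ --- is comparatively routine.
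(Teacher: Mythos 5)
Your proposal is correct and follows essentially the same route as the paper: the same two transformations (repeatedly feeding the current delivered sequence concatenated with a duplicate-free batch of pending messages into successive \EC{} instances, and conversely broadcasting tagged proposals via \ETOB{} and deciding on the first delivered message for each instance), with the same prefix-chain and common-first-message arguments for the eventual properties. Your extra care in the \ETOB$\to$\EC{} direction --- choosing $k$ to dominate both the instances reached before stabilization and those ever invoked by faulty processes --- is a minor tightening of the paper's choice of $k$, not a different method.
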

\begin{proof}
\paragraph{From {\EC} to {\ETOB}} 
To prove this result, it is sufficient to provide a protocol that
implements \ETOB~in an environment $\mathcal{E}$ knowing that there
exists a protocol that implements \EC~in this environment. This
transformation protocol $\mathcal{T}_{\EC\rightarrow\ETOB}$ is stated
in Algorithm~\ref{algo:ECtoETOB}. 
Now,  we are going to prove that $\mathcal{T}_{\EC\rightarrow\ETOB}$ implements \ETOB.

Assume that there exists a message $m$ broadcast by a correct process
$p_i$ at time $t$. As $p_i$ is correct, every correct process receives
the message $push(m)$ in a finite time. Then, $m$ appears in the set
$toDeliver$ of all correct processes in a finite time. Hence, by the
termination property of \EC~and the construction of the function
$NewBatch$, there exists $\ell$ such that $m$ is included in any
sequence submitted to $\textit{proposeEC}_\ell$. By the \EC-Validity
and the \EC-Termination properties, we deduce that $p_i$ stably delivers $m$ in a finite time, that proves that $\mathcal{T}_{\EC\rightarrow\ETOB}$ satisfies the \TOB-Validity property.

If a process $p_i$ delivers a message $m$ at time $t$, then $m$
appears in the sequence responded by its last invocation of
$\textit{proposeEC}_\ell$. By construction and by the \EC-Validity
property, this sequence contains only messages that appear in the set $toDeliver$ of a process $p_j$ at the time $p_j$ invokes $\textit{proposeEC}_\ell$. But this set is incrementally built at the reception of messages $push$ that contains only messages broadcast by a process. This implies that $\mathcal{T}_{\EC\rightarrow\ETOB}$ satisfies the \TOB-No-creation.

As the sequence outputted at any time by any process is the response to its last invocation of $\textit{proposeEC}$ and that the sequence submitted to any invocation of this primitive contains no duplicated message (by definition of the function $NewBatch$), we can deduce from the \EC-Validity property that $\mathcal{T}_{\EC\rightarrow\ETOB}$ satisfies the \TOB-No-duplication.

Assume that a correct process $p_i$ stably delivers a message $m$,
i.e., there exists a time after which $m$ always appears in $d_i$.
By the algorithm, $m$ always appears in the response of
$\textit{proposeEC}$ to $p_i$ after this time.
As \EC-Agreement property is eventually satisfied, we can deduce that
$m$ always appears in the response of $\textit{proposeEC}$ for any
correct process after some time. Thus, any correct process stably
delivers $m$, and $\mathcal{T}_{\EC\rightarrow\ETOB}$ satisfies the \TOB-Agreement.

Let $\tau$ be the time after which the \EC~primitive satisfies \EC-Agreement and \EC-Validity.

Let $p_i$ be a correct process and $\tau\leq t_1\leq t_2$. Let $\ell_1$ (respectively $\ell_2$) be the integer such that $d_i(t_1)$ (respectively $d_i(t_2)$) is the response of $\textit{proposeEC}_{\ell_1}$ (respectively $\textit{proposeEC}_{\ell_2}$). By construction of the protocol and the \EC-Agreement and \EC-Validity properties, we know that, after time $\tau$, the response of $\textit{proposeEC}_\ell$ to correct processes is a prefix of the response of $\textit{proposeEC}_{\ell+1}$. As we have $\ell_1\leq \ell_2$, we can deduce that $\mathcal{T}_{\EC\rightarrow\ETOB}$ satisfies the \ETOB-Stability property.

Let $p_i$ and $p_j$ be two correct processes such that two messages $m_1$ and $m_2$ appear in $d_i(t)$ and $d_j(t)$ at time $t\geq \tau$. Let $\ell$ be the smallest integer such that $m_1$ and $m_2$ appear in the response of $\textit{proposeEC}_\ell$. By the \EC-Agreement property, we know that the response of $\textit{proposeEC}_\ell$ is identical for all correct processes. Then , by the \ETOB-Stability property proved above, that implies that, if $m_1$ appears before $m_2$ in $d_i(t)$, then $m_1$ appears before $m_2$ in $d_j(t)$. In other words, $\mathcal{T}_{\EC\rightarrow\ETOB}$ satisfies the \ETOB-Total-order property.

In conclusion, $\mathcal{T}_{\EC\rightarrow\ETOB}$ satisfies the \ETOB~specification in an environment $\mathcal{E}$ provided that there exists a protocol that implements \EC~in this environment.

\begin{algorithm}
\caption{$\mathcal{T}_{\EC\rightarrow\ETOB}$: transformation from \EC~to \ETOB~for process $p_i$}\label{algo:ECtoETOB}
\small
\begin{description}\itemsep0pt
\item[Output variable:]~\\
$d_i$: sequence of messages of $M$ (initially empty) outputted at any time by $p_i$

\item[Internal variables:]~\\
$toDeliver_i$: set of messages of $M$ (initially empty) containing all messages received by $p_i$~\\
$count_i$: integer (initially $0$) that stores the number of the last instance of consensus invoked by $p_i$

\item[Messages:]~\\
$push(m)$  with $m$ a message of $M$

\item[Functions:]~\\
$Send(message)$ sends $message$ to all processes (including $p_i$)\\
$NewBatch(d_i,toDeliver_i)$ returns a sequence containing all messages from the set $toDeliver_i\setminus\{m|m\in d_i\}$

\item[On reception of $broadcastETOB(m)$ from the application]~\\
$Send(push(m))$

\item[On reception of $push(m)$ from $p_j$]~\\
$toDeliver_i:=toDeliver_i\cup\{m\}$

\item[On reception of $d$ as response of $\textit{proposeEC}_\ell$]~\\
$d_i:=d$\\
$count_i:=count_i+1$\\
$\textit{proposeEC}_{count_i}(d_i. NewBatch(d_i,toDeliver_i))$

\item[On local timeout]~\\
If $count_i=0$ then\\
$\left\lfloor\begin{array}{l}
count_i:=1\\
\textit{proposeEC}_{1}(NewBatch(d_i,toDeliver_i))
\end{array}\right.$

\end{description}
\normalsize
\end{algorithm}

\paragraph{From {\ETOB} to {\EC}} To prove this result, it is sufficient to provide a protocol that
implements \EC~in an environment $\mathcal{E}$ given a protocol that
implements \ETOB~in this environment. This transformation protocol
$\mathcal{T}_{\ETOB\rightarrow\EC}$ is stated  in Algorithm
\ref{algo:ETOBtoEC}. Now, we are going to prove that $\mathcal{T}_{\ETOB\rightarrow\EC}$ implements \EC.

Let $p_i$ be a correct process that invokes $\textit{proposeEC}_{\ell}(v)$ with $\ell\in\mathbb{N}$. Then, by fairness and the \TOB-Validity property, the construction of the protocol implies that the \ETOB~primitive delivers the message $(\ell,v)$ to $p_i$ in a finite time. By the use of the local timeout, we know that $p_i$ returns from $\textit{proposeEC}_{\ell}(v)$ in a finite time, that proves that $\mathcal{T}_{\ETOB\rightarrow\EC}$ satisfies the \EC-Termination property.

The update of the variable $count_i$ to $\ell$ for any process $p_i$ that invokes $\textit{proposeEC}_\ell$ and the assumptions on operations $\textit{proposeEC}$ ensure us that $p_i$ executes at most once the function\linebreak $DecideEC(\ell,received_i[\Omega_i,\ell])$. Hence, $\mathcal{T}_{\ETOB\rightarrow\EC}$ satisfies the \EC-Integrity property.

Let $\tau$ be the time after which the \ETOB-Stability and the \ETOB-Total-order properties are satisfied. Let $k$ be the smallest integer such that any process that invokes $\textit{proposeEC}_k$ in run $r$ invokes it after $\tau$.

If we assume that there exist two correct processes $p_i$ and $p_j$ that return different values to $\textit{proposeEC}_{\ell}$ with $\ell\geq k$, we obtain a contradiction with the \ETOB-Stability, \ETOB-Total-order, or \TOB-Agreement property. Indeed, if $p_i$ returns a value after time $\tau$, that implies that this value appears in $d_i$ and then, by the \TOB-Agreement property, this value eventually appears in $d_j$. If $p_j$ returns a different value from $p_i$, that implies that this value is the first occurrence of a message associated to $\textit{proposeEC}_{\ell}$ in $d_j$ at the time of the return of $\textit{proposeEC}_{\ell}$. After that, $d_j$ cannot satisfy simultaneously the \ETOB-Stability and the \ETOB-Total-order properties. This contradiction shows that $\mathcal{T}_{\ETOB\rightarrow\EC}$ satisfies the \EC-Agreement property.

If we assume that there exists a process $p_i$ that returns to $\textit{proposeEC}_{\ell}$ with $\ell\in\mathbb{N}$ a value that was not proposed to $\textit{proposeEC}_{\ell}$, we obtain a contradiction with the \TOB-No-creation property. Indeed, the return of $p_i$ from $\textit{proposeEC}_{\ell}$ is chosen in $d_i$ that contains the output of the \ETOB~primitive and processes broadcast only proposed values. This contradiction shows that $\mathcal{T}_{\ETOB\rightarrow\EC}$ satisfies the \EC-Validity property.

In conclusion, $\mathcal{T}_{\ETOB\rightarrow\EC}$ satisfies the \EC~specification in an environment $\mathcal{E}$ provided that there exists a protocol that implements \ETOB~in this environment.
\end{proof}

\begin{algorithm}
\caption{$\mathcal{T}_{\ETOB\rightarrow\EC}$: transformation from \ETOB~to \EC~for process $p_i$}\label{algo:ETOBtoEC}
\small
\begin{description}\itemsep0pt
\item[Internal variables:]~\\
$count_i$: integer (initially $0$) that stores the number of the last instances of consensus invoked by $p_i$\\
$d_i$: sequence of messages (initially empty) outputted to $p_i$ by the \ETOB~primitive

\item[Functions:]~\\
$First(\ell)$: returns the value $v$ such that $(\ell,v)$ is the first message of the form $(\ell,*)$ in $d_i$ if such messages exist, $\bot$ otherwise\\
$DecideEC(\ell,v)$: returns the value $v$ as response to $\textit{proposeEC}_{\ell}$

\item[On invocation of $\textit{proposeEC}_{\ell}(v)$]~\\
$count_i:=\ell$\\
$broadcastETOB((\ell,v))$

\item[On local time out]~\\
If $First(count_i)\neq\bot$ then\\
$\left\lfloor\begin{array}{l}
DecideEC(count_i,First(count_i))
\end{array}\right.$
\end{description}
\normalsize
\end{algorithm}

%
%
%

%
%

\section{The Weakest Failure Detector for \EC}
\label{sec:WFD}

In this section, we show that $\Omega$ is necessary and sufficient for
implementing the eventual consensus abstraction \EC:

\begin{theorem}\label{th:WFDforEC}
In any environment $\mathcal{E}$, $\Omega$ is the weakest failure detector for \EC.
\end{theorem}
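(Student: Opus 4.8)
The plan is to prove the theorem in two directions: sufficiency (there is an algorithm using $\Omega$ that implements \EC{} in any environment), and necessity ($\Omega$ is weaker than any failure detector $\D$ that implements \EC{} in any environment). By Theorem~\ref{th:EquivalenceECETOB} it suffices to work with either \EC{} or \ETOB{} as convenient, and for sufficiency I would actually appeal to the algorithm of Section~\ref{sec:ETOB}, which implements \ETOB{} using $\Omega$ in any environment; transforming back gives an \EC{} implementation. So the real content here is the \emph{necessity} direction, which is the adaptation of the Chandra--Hadzilacos--Toueg argument.

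For necessity, fix an environment $\E$, a failure detector $\D$, and an algorithm $\A_{\EC}$ solving \EC{} using $\D$ in $\E$. The goal is to build, out of $\D$ and local simulations of $\A_{\EC}$, an emulation of $\Omega$: at every correct process, eventually output a fixed common correct leader. Following CHT, each process $p_i$ runs a communication task that periodically exchanges its failure-detector samples (timestamped) with everyone, thereby assembling an ever-growing \emph{DAG} $G_i$ of failure-detector values that is consistent with some history in $\D(F)$; the DAGs at correct processes converge to a common infinite limit DAG $G$. Using $G$ one simulates, for every finite ``schedule'' through it and every input assignment, a run of $\A_{\EC}$ restricted to a single consensus instance $\textit{proposeEC}_j$ — this is where the adaptation matters. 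In the original proof one builds a ``stopping'' forest of simulated runs of consensus and tags vertices with $0$/$1$ \emph{valencies}; the decisions of consensus give the valency a robust meaning. Here I would fix attention on one instance, say instance $1$, and feed all-$0$'s and all-$1$'s (and mixed) input vectors; the key point is that \EC{} only guarantees \EC-Agreement for instances $j \ge k$ with $k$ unknown, but \EC-Termination, \EC-Integrity and \EC-Validity hold for \emph{every} instance, and in any \emph{single} simulated run each correct process still returns some value. The subtlety is that different simulated runs of instance $1$ may disagree (since $1$ might be below $k$), so the naive ``decision gives valency'' step breaks; I would recover it by noticing that the CHT tagging argument only needs, in each individual simulated run, that processes return \emph{and} that in runs where all inputs are equal the returned value equals that input (\EC-Validity) — agreement across runs is not what drives the extraction of a leader. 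This is exactly the ``hidden power'' of the CHT technique advertised in the introduction: the valency/critical-index machinery produces a correct process from the \emph{structure} of the forest of simulated runs even when the simulated abstraction is only eventual agreement.

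Concretely the steps I would carry out are: (i) set up the DAG construction and prove the convergence and consistency properties (standard, cite~\cite{CHT96}); (ii) define, from $G_i$, the simulation of instance-$1$ runs of $\A_{\EC}$ over all initial input vectors, obtaining at each correct process a growing ``simulation tree/forest''; (iii) define valency of a simulated finite schedule via the values returned in its extensions, and establish the existence of a \emph{critical index} / a process that is ``decision-relevant'' in infinitely many of these simulated runs — arguing that such a process must be correct, because a faulty process contributes only finitely much to the limit DAG $G$ and hence cannot be pivotal infinitely often; (iv) show that the correct processes eventually agree on the identity of such a process (by taking, say, the smallest id meeting the criterion in the converged structure), and output it as the $\Omega$-emulation; (v) conclude $\Omega \preceq \D$, and combine with sufficiency to finish. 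I would present (i) and parts of (iii) by reference to the CHT appendix mentioned in the paper, and spell out in detail only the place where ``consensus'' is replaced by ``eventual consensus.''

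The main obstacle I anticipate is exactly step (iii): making the valency notion well-defined and monotone when the underlying abstraction can return inconsistent values in different simulated runs of instance $1$. In classical CHT, once a simulated run ``decides $v$'' that value is frozen and agreement glues the forest together; here I must argue that the leader-extraction still goes through using only per-run termination + validity, i.e.\ that the forest of simulated runs, equipped with this weaker tagging, still exhibits a correct pivotal process. I expect to handle this by working with \emph{monovalent input vectors} (all-$0$ and all-$1$), where \EC-Validity pins down the returned value regardless of the instance index, so the two ``ends'' of the forest are unambiguously $0$-valent and $1$-valent; the existence of a correct critical process then follows from the same locality-of-faulty-processes argument as in CHT. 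Once that is in place, the rest is bookkeeping essentially identical to~\cite{CHT96}.
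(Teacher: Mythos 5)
Your sufficiency direction is fine: the paper actually gives a more direct construction (Algorithm~\ref{algo:EC}, where each process simply decides the value it received from its current $\Omega$-leader for the current instance), but routing through the \ETOB{} algorithm of Section~\ref{sec:ETOB} and Theorem~\ref{th:EquivalenceECETOB} also works. The necessity direction, however, has a genuine gap, and it sits exactly where you flagged your ``main obstacle'': restricting the simulation to instance $1$ and hoping that per-run termination plus validity suffice does not recover the CHT extraction. The two halves of the CHT machinery have different needs. The lemmas asserting that a univalent critical index names a correct process, and that the deciding process of a decision gadget is correct, indeed use only termination, validity and indistinguishability, as you observe. But the lemma asserting that a decision gadget \emph{exists} below a bivalent vertex is proved by running the gadget-locating walk and deriving a contradiction from ``no correct process can decide at a bivalent vertex, else Agreement is violated.'' For instance $1$ of \EC{} there is no agreement guarantee at all, so a correct process may return a value at a bivalent vertex while some descendant returns the other value; the walk can then traverse bivalent vertices forever with everyone deciding, no contradiction arises, and no gadget need exist. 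Pinning down the all-$0$ and all-$1$ roots via \EC-Validity only yields the existence of a critical index; if that index is bivalent critical (which you cannot exclude), your argument stalls.

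The paper's proof resolves this differently: it simulates runs spanning \emph{all} instances $\textit{proposeEC}_1,\textit{proposeEC}_2,\ldots$ and assigns each vertex a per-instance $k$-tag that, besides $0$ and $1$, may contain a special symbol $\bot$ recording that some descendant violates agreement in instance $k$. It then shows (Algorithm~\ref{alg:bivalent}) that the limit tree must contain a vertex that is $k$-bivalent for some $k$ --- tag exactly $\{0,1\}$, no $\bot$ --- because otherwise one can splice the offending descendants into a single admissible run in which agreement fails in infinitely many instances while every correct process keeps taking steps and receiving its messages, contradicting \EC-Agreement (which must hold from some unknown instance on). In the subtree rooted at such a vertex, instance $k$ genuinely satisfies agreement, and only then is the unmodified CHT fork/hook argument applied to that instance. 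So the ``hidden power'' advertised in the introduction is not that agreement is dispensable; it is that \emph{eventual} agreement across instances is enough to locate one instance and one subtree in which agreement actually holds. To repair your proof you would need to replace the single-instance simulation with this multi-instance search (or an equivalent device), and also address how all correct processes converge on the \emph{same} $k$-bivalent vertex, which the paper handles by ordering vertices of $\Upsilon$ via the DAG indices used to simulate them.
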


\paragraph{$\Omega$ is necessary for \EC}
Let $\E$ be any environment.
We show below that $\Omega$ is weaker
than any failure detector $\D$ that can be used to solve {\EC} in $\E$.
Recall that implementing $\Omega$ means outputting, at every process,
        the identifier of a \emph{leader} process so that eventually,
         the same correct leader is output permanently at all correct
        processes.

First,  we briefly recall the arguments use by Chandra et al.~\cite{CHT96} in
the original CHT proof deriving $\Omega$ from any algorithm solving
consensus (to get a more detailed survey of the proof please rever
to Appendix~\ref{app:cht} or~\cite[Chapter 3]{fd-survey}).
The basic observation there is that
	a run of any algorithm using a failure detector
	induces a \emph{directed acyclic graph} (DAG).
The DAG contains a sample of failure detector values output by $\D$ in the current run
	and captures causal relations between them.
Each process $p_i$ maintains  a local copy of the DAG, denoted by $G_i$:
$p_i$ periodically queries its failure detector module, updates $G_i$
by connecting every vertex of the DAG with the vertex containing the
returned failure-detector value with an edge, and broadcasts the DAG.
An edge from vertex $[p_i,d,m]$ to vertex $[p_j,d',m']$
        is thus interpreted as ``$p_i$ queried $\D$ for the $m$th time and
        obtained value $d$ and after that $p_j$ queried $\D$ for the
        $m'$th time and obtained value $d'$''.
Whenever $p_i$ receives a DAG $G_j$ calculated earlier by $p_j$, $p_i$
merges $G_i$ with $G_j$.
As a result, DAGs maintained by the correct processes converge to the same
infinite DAG $G$.
The DAG $G_i$ is then used by $p_i$ to simulate a number of runs of
the given consensus algorithm $\A$ for all possible inputs to the processes.
All these  runs are organized in the form of a \emph{simulation
  tree} $\Upsilon_i$.
The simulation trees $\Upsilon_i$ maintained by the correct
processes converge to the same infinite simulation tree $\Upsilon$.

The outputs produced in the simulated runs of $\Upsilon_i$ are then used by $p_i$
to compute the current estimate of $\Omega$.
Every vertex $\sigma$ of $\Upsilon_i$ is assigned a valency tag based on the
decisions taken in all its \emph{extensions} (descendants of $\sigma$ in
$\Upsilon_i$): $\sigma$ is assigned a tag
$v\in\{0,1\}$ if  $\sigma$ has an extension in which some process decides
$v$.
A vertex is bivalent if it is assigned both $0$ and $1$.
It is then shown in~\cite{CHT96} that by locating the same bivalent vertex in the
limit tree $\Upsilon$, the correct process can eventually extract
the identifier of the same correct process.
(More details can be found in Appendix~\ref{app:cht} and~\cite{CHT96,fd-survey}.)

We show that this method, originally designed for
consensus, can be extended to eventual consensus (i.e., to the  weaker {\EC} abstraction). The extension is not trivial 
and requires carefully adjusting the notion of valency of a vertex in the
simulation tree.

\begin{lemma}\label{lem:NecessityOmegaEC}
In every environment $\mathcal{E}$, if a failure detector $\mathcal{D}$ implements \EC~in $\mathcal{E}$, then $\Omega$ is weaker than $\mathcal{D}$ in $\mathcal{E}$.
\end{lemma}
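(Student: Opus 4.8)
I would follow the skeleton of the CHT proof and explain where the notion of valency must be adjusted. Assume $\mathcal{D}$ implements \EC\ in $\mathcal{E}$ via some algorithm $\mathcal{A}$. Each correct process builds, from the values returned by its $\mathcal{D}$-module and the messages received, an ever-growing DAG $G_i$; these DAGs converge to a common infinite DAG $G$. From $G_i$, process $p_i$ simulates runs of $\mathcal{A}$ for all finite input assignments to the \EC\ instances, organizing them in a simulation tree $\Upsilon_i$; these trees converge to a common infinite tree $\Upsilon$. Since $\mathcal{A}$ solves \EC, the crux is that from some (unknown) instance index $k$ onward, the consensus instances are ``real'' consensus: by \EC-Agreement, all correct processes decide the same value in every simulated run that respects admissibility, and by \EC-Termination every such run decides. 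So I would first argue that, in the limit tree $\Upsilon$, there is a fixed level of the tree --- corresponding to a high enough instance index --- at or below which each vertex behaves exactly as in the classical consensus simulation tree of \cite{CHT96}. This is the conceptual heart: the \EC\ abstraction is eventually a consensus oracle, and the simulation tree ``sees'' this eventual behaviour on a cofinal subtree.

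Concretely, I would fix a ``tail subtree'' of $\Upsilon$: vertices whose simulated execution has already driven the \EC\ instance counter past the stabilization point $k$ of that simulated run. The subtlety is that $k$ depends on the run, so I cannot point to a single level of $\Upsilon$ a priori; instead I would argue that along any infinite path of $\Upsilon$ corresponding to an admissible run, such a point is eventually reached, and then redefine valency relative to the current instance: a vertex $\sigma$ of $\Upsilon_i$ whose current \EC-instance is $\ell$ is tagged with $v\in\{0,1\}$ if $\sigma$ has a descendant in which some process returns $v$ to $\textit{proposeEC}_\ell$ with the inputs of instance $\ell$ fixed to a bivalence-inducing pattern. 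With this instance-indexed valency, the classical dichotomy (monovalent vs. bivalent, the existence of a ``critical'' vertex, and the use of a decision gap between two monovalent children to expose a correct process) goes through on the tail subtree, because there \EC\ agreement is in force and the instance behaves like consensus. I would then invoke, essentially verbatim, the CHT extraction lemma: locating the same critical/bivalent vertex in the limit tree $\Upsilon$ lets every correct process eventually output the id of the same correct process, giving $\Omega$.

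**Main obstacle.** The delicate point is precisely that the stabilization index $k$ of \EC\ is run-dependent and not known to the processes, so one cannot simply truncate $\Upsilon$ at a fixed level and invoke the CHT argument as a black box. The fix --- and the part that needs the most care --- is to (i) show that each infinite admissible branch of the limit tree eventually enters the ``consensus regime,'' (ii) define valency per-instance so that bivalence is meaningful even before $k$ is known, and (iii) verify that the critical-vertex / decision-gap argument only ever needs the consensus-like behaviour of the \emph{one} instance active at the critical vertex, which by monotone growth of the instance counter is in the regime $\ell\ge k$ for branches deep enough. Once that is nailed down, the extraction of $\Omega$ is the standard CHT machinery (I would cite Appendix~\ref{app:cht} and \cite{CHT96}). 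I would also note the easy direction for completeness --- that $\Omega$ suffices for \EC\ --- is handled separately by the algorithm of Section~\ref{sec:ETOB} together with Theorem~\ref{th:EquivalenceECETOB}.
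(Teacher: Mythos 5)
Your overall architecture is the paper's: the DAG/simulation-tree machinery, a per-instance notion of valency, and the CHT extraction applied at a suitably ``stabilized'' vertex. You also correctly identify the crux, namely that the stabilization index $k$ of \EC-Agreement is run-dependent. But your proposed resolution of that crux has a genuine gap. Saying that ``each infinite admissible branch eventually enters the consensus regime'' and that deep vertices have active instance $\ell\ge k$ does not establish what the argument actually needs: a \emph{finite} vertex $\sigma$ and an index $k$ such that, among \emph{all} descendants of $\sigma$, instance $k$ behaves like consensus (both $0$ and $1$ are returned in some descendants, and no descendant violates agreement in instance $k$). A finite vertex has infinitely many admissible extensions, each with its own stabilization index, and these indices need not be bounded; so for any fixed vertex and any fixed $\ell$, some extension may still violate agreement in instance $\ell$. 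Your valency definition also omits the tag that records such violations (the paper's $\bot$ / ``$k$-invalid'' tag), and without it ``$k$-bivalent'' cannot be distinguished from ``agreement fails in instance $k$ somewhere below $\sigma$'' --- a distinction the decision-gadget lemma relies on, since its contradiction (``$p$ decides $0$ in an extension of a $1$-valent vertex'') presupposes that agreement holds in all simulated extensions for the instance in question.

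The paper closes this gap with an explicit iterative descent (its Algorithm~\ref{alg:bivalent}): starting from the root with $k=1$, if the current vertex is not $k$-bivalent then (because its $k$-tag contains both values yet is not clean) it has a descendant in which agreement fails for $\textit{proposeEC}_k$; extend further so that every correct process completes instance $k$ and receives all pending messages, pick a new $k'>k$ whose tag contains $\{0,1\}$ (guaranteed by \EC-Termination and \EC-Validity), and repeat. If this never terminates, the concatenation of the chosen extensions is a \emph{single} admissible run in which agreement fails in infinitely many instances, contradicting \EC-Agreement for that run. This diagonalization --- trading ``every run eventually stabilizes'' for ``some finite vertex is already clean for some instance'' --- is the new content of the lemma, and it is the step your sketch asserts rather than proves. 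Once you add the $\bot$ tag and this descent argument (plus the deterministic ordering of vertices so that all correct processes converge on the \emph{same} $k$-bivalent vertex), the rest of your plan matches the paper.
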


\begin{proof}
Let $\mathcal{A}$ be any algorithm that implements \EC~using
a failure detector $\mathcal{D}$ in an environment $\mathcal{E}$.
As in~\cite{CHT96}, every process $p_i$ maintains a failure detector sample
stored in DAG $G_i$ and periodically uses $G_i$ to simulate a set of
runs of $\A$ for all possible sequence of inputs of {\EC}.
The simulated runs are organized by $p_i$ in an ever-growing \emph{simulation
tree} $\Upsilon_i$.
A vertex of $\Upsilon_i$ is the schedule of a finite run of $\A$
``triggered'' by a path in $G_i$ in which every process starts with invoking
$\textit{proposeEC}_{1}(v)$, for some $v\in\{0,1\}$, takes steps using the failure detector
values stipulated by the path in $G_i$ and, once $\textit{proposeEC}_{\ell}(v)$
is complete, eventually invokes
$\textit{proposeEC}_{\ell+1}(v')$, for some $v'\in\{0,1\}$.
(For the record, we equip each vertex of $\Upsilon_i$ with the path in $G_i$ used to produce it.)
A vertex is connected by an edge to each one-step extension
of it.~\footnote{In~\cite{CHT96},  the simulated schedules form
  a \emph{simulation forest}, where a distinct simulation tree
  corresponds to each initial configuration encoding
  consensus inputs. Here we
  follow~\cite{JT08}: there is a single initial configuration and inputs
  are encoded in the form of input histories. As a result, we get a
  single simulation tree where branches depend on the parameters of
  $\textit{proposeEC}_{\ell}$ calls.}

Note that in every admissible infinite simulated run,
\EC-Termination, \EC-Integrity and \EC-Validity
are satisfied and that there is $k>0$
such that for all $\ell\geq k$, the invocations and responses of
$\textit{proposeEC}_{\ell}$ satisfy the \EC-Agreement.

Since processes periodically broadcast their DAGs,
the simulation tree $\Upsilon_i$ constructed locally by a correct process $p_i$
\emph{converges} to an infinite simulation tree $\Upsilon$,
in the sense that every finite subtree of
$\Upsilon$  is eventually part of $\Upsilon_i$.
The infinite simulation tree $\Upsilon$, starting
from the initial configuration of $\mathcal{A}$ and, in the limit, contains
all possible schedules that can triggered by the paths DAGs $G_i$.

Consider a vertex $\sigma$ in $\Upsilon$ identifying a unique finite
schedule of a run  of $\mathcal{A}$ using $\mathcal{D}$ in the current failure
pattern $F$. For $k>0$, we say that $\sigma$ is \emph{$k$-enabled} if $k=1$ or $\sigma$
contains a response from $\textit{proposeEC}_{k-1}$ at some process.
Now we associate each vertex $\sigma$ in $\Upsilon$ with a set of
\emph{valency tags} associated with each ``consensus instance''  $k$,
called the \emph{$k$-tag} of $\sigma$, as follows:

\begin{itemize}
\item If $\sigma$ is $k$-enabled and has a
descendant (in $\Upsilon$) in which $\textit{proposeEC}_k$ returns $x\in\{0,1\}$, then
$x$ is added to the $k$-tag of $\sigma$.

\item If $\sigma$ is $k$-enabled and has a descendant in which two
different values are returned by $\textit{proposeEC}_{k}$, then $\bot$ is added to
the $k$-tag of $\sigma$.

\end{itemize}

If $\sigma$ is not $k$-enabled, then its $k$-tag is empty.
If the $k$-tag of $\sigma$ is $\{x\}$, $x\in\{0,1\}$, we say that
$\sigma$  is \emph{$(k,x)$-valent} (\emph{$k$-univalent}).  If the
$k$-tag is $\{0,1\}$,  then we say that $\sigma$ is $k$-\emph{bivalent}.
If the $k$-tag of $\sigma$ contains $\bot$, we say that $\sigma$ is \emph{$k$-invalid}

Since $\mathcal{A}$ ensures \EC-Termination in all admissible runs extending $\sigma$,
each $k$-enabled vertex $\sigma$, the $k$-tag of $\sigma$ is non-empty.
Moreover, \EC-Termination and \EC-Validity imply that a vertex
in which no process has invoked $\textit{proposeEC}_{k}$ yet has a
descendant in which $\textit{proposeEC}_{k}$ returns $0$ and a
descendant in which $\textit{proposeEC}_{k}$ returns $1$.
Indeed, a run in which only $v$, $v\in\{0,1\}$ is proposed in instance
$k$ and every correct process takes enough steps must contain $v$ as
an output.
Thus:

\begin{enumerate}
\item[(*)] For each vertex $\sigma$, there exists $k\in\mathbb{N}$ and $\sigma'$,
a descendant of $\sigma$, such that $k$-tag of $\sigma'$ contains $\{0,1\}$.
\end{enumerate}

\begin{algorithm}[t]
\caption{Locating a bivalent vertex in $\Upsilon$.}\label{alg:bivalent}
\footnotesize
$k := 1$\\
$\sigma := $ root of $\Upsilon$\\
while true do\\
$\left\lfloor\begin{array}{l}
\text{if } \sigma \text{ is } k \text{-bivalent then break}\\
\sigma_1 := \text{ a descendant of } \sigma \text{ in which}\\
\hspace{1cm} \text{\EC-Agreement does not hold for }\textit{proposeEC}_k\\
\sigma_2 := \text{ a descendant of } \sigma_1 \text{ in which every correct process}\\
\hspace{1cm} \text{completes } \textit{proposeEC}_k \text{ and
  receives }  \\
\hspace{1cm} \text{all messages sent to it in } \sigma\\
\text{choose } k'>k \text{ and } \sigma_3 \text{, a descendant of }
\sigma_2 \text{, such that }\\
\hspace{1cm} k' \text{-tag of } \sigma_3 \text{ contains } \{0,1\}\\
k := k'\\
\sigma := \sigma_3\\
\end{array}\right.$
\normalsize
\end{algorithm}

If the ``limit tree'' $\Upsilon$ contains a $k$-bivalent vertex, we
can apply the arguments of~\cite{CHT96} to extract $\Omega$. 
Now we show that such a vertex exists in $\Upsilon$.  Then we
can simply let every process locate the ``first'' such vertex in its
local tree $\Upsilon_i$.
To establish an order on the vertices, we can associate
each vertex $\sigma$ of $\Upsilon$ with the value $m$
such that vertex $[p_i,d,m]$ of $G$ is used to
simulate the last step of $\sigma$ (recall that we equip
each vertex of $\Upsilon$ with the corresponding path).
Then we order vertices of $\Upsilon$ in the order consistent with the
growth of $m$. Since every vertex in $G$ has only finitely many
incoming edges, the sets of vertices having the same value of $m$ are
finite.
Thus, we can break the ties in the $m$-based order using any
deterministic procedure on these finite sets.

Eventually, by choosing the first $k$-bivalent vertex in their local
trees $\Upsilon_i$, the correct processes will eventually stabilize on the same
$k$-bivalent vertex $\tilde\sigma$ in the limit tree $\Upsilon$
and apply the CHT extraction procedure to
derive the same correct process based on $k$-tags assigned to
$\tilde\sigma$'s descendants.

It remains to show  that $\Upsilon$ indeed contains a $k$-bivalent
vertex for some $k$.  Consider the procedure described in
Algorithm~\ref{alg:bivalent} that intends to locate such a  vertex, starting with the root of the tree.

For the currently considered $k$-enabled vertex $\sigma$ that is \emph{not}
$k$-bivalent (if it is $k$-bivalent, we
are done), we use (*) to locate $\sigma_3$, a descendant of $\sigma$, such that
(1)~in $\sigma_3$, two processes return different values in $\textit{proposeEC}_k$ in
$\sigma_3$,
(2)~in $\sigma_3$, every correct process has completed $\textit{proposeEC}_k$ and
has received every message sent to it in $\sigma$, and
(3)~the $k'$-tag of $\sigma_3$ contains $\{0,1\}$.

Thus, the procedure in Algorithm~\ref{alg:bivalent}  either terminates by locating a $k$-bivalent tag and then we
are done, or it never terminates. Suppose, by contradiction, that the procedure
never terminates. Hence, we have an
infinite admissible run of $\mathcal{A}$ in which no agreement is
provided in infinitely many instances of consensus.  Indeed, in the
constructed path along the tree, every correct process appears
infinitely many times and receives every message sent to it.
This admissible run violated the \EC-Agreement property of \EC---a contradiction.

Thus, the correct processes will eventually locate the same
$k$-bivalent vertex and then, as in~\cite{CHT96},  stabilize extracting the same correct process identifier to emulate $\Omega$.
\end{proof}

\paragraph{$\Omega$ is sufficient for \EC}
Chandra and Toueg proved that $\Omega$ is sufficient to implement the
classical version of the consensus abstraction in an environment
where a majority of processes are correct \cite{CT96}. In this
section, we extend this result to the eventual consensus abstraction for any environment.

The proposed implementation of \EC~is very simple. Each process has access to an $\Omega$ failure detector module.
Upon each invocation of the \EC~primitive, a process broadcasts the
proposed value (and the associated consensus index).
Every process stores every received value.
Each process $p_i$ periodically checks whether it has received a value for the current consensus instance from the process
that it currently believes to be the leader.  If so, $p_i$ returns this value.
The correctness of this \EC~implementation relies on the fact that,
eventually, all correct processes trust the same leader
(by the definition of  $\Omega$) and then decide (return responses) consistently on the values proposed by this process. 

\begin{lemma}\label{lem:SufficiencyOmegaEC}
In every environment $\mathcal{E}$, \EC~can be implemented using $\Omega$.
\end{lemma}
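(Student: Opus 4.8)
The plan is to exhibit a simple $\Omega$-based algorithm implementing \EC~in an arbitrary environment $\mathcal{E}$ and then verify its four properties. Each process $p_i$ keeps a local counter $\ell$ of the current instance; upon invocation of $\textit{proposeEC}_\ell(v)$ it broadcasts $(\ell,v)$ to all processes, and it stores, for every pair $(j,\ell')$, the first value it ever hears that process $p_j$ proposed to instance $\ell'$ (by reliable links and $p_i$ retransmitting, we may assume every sent message eventually reaches every correct process). Periodically $p_i$ queries $\Omega$, obtaining a leader $\mathit{leader}_i$; if $p_i$ has stored a value $v$ proposed by $\mathit{leader}_i$ to the current instance $\ell$, it returns $v$ as the response to $\textit{proposeEC}_\ell$ and moves on (the application then invokes $\textit{proposeEC}_{\ell+1}$, and $p_i$ increments its counter accordingly).

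First I would argue \EC-Validity and \EC-Integrity, which hold in every run regardless of $\Omega$'s behaviour: the value $p_i$ returns to $\textit{proposeEC}_\ell$ is some value broadcast as $(\ell,\cdot)$, hence it was proposed to instance $\ell$ by some process, giving Validity; and since $p_i$ only acts on instance $\ell$ while its counter equals $\ell$ and the counter strictly increases, $p_i$ responds to each $\textit{proposeEC}_\ell$ at most once, giving Integrity. Next I would establish \EC-Termination. Fix a correct process $p_i$ and an instance $\ell$ that $p_i$ invokes. By the properties of $\Omega$, since a correct process exists, there is a time after which every correct process permanently outputs the same correct leader $p_c$. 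Because each instance is eventually invoked by every correct process that keeps taking steps, and $p_c$ is correct, $p_c$ will broadcast its value for instance $\ell$ (once its own counter reaches $\ell$), and this message eventually reaches $p_i$; hence $p_i$ eventually has a stored value from a process it permanently trusts and returns. A small induction on $\ell$ is needed here: to know $p_c$ reaches instance $\ell$ one uses that all correct processes (in particular $p_c$) return from instance $\ell-1$, which is the inductive hypothesis — so Termination for all instances follows.

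The key step is \EC-Agreement: exhibiting a $k$ such that for all $\ell\ge k$ no two processes return different values to $\textit{proposeEC}_\ell$. Let $t_0$ be a time after which $\Omega$ outputs the same correct leader $p_c$ at every correct process, and let $k$ be an instance number large enough that no process invokes $\textit{proposeEC}_k$ before $t_0$ — such a $k$ exists because, by Termination applied to instances $1,\dots$, the invocation times of successive instances at any fixed correct process are increasing, and only correct processes invoke infinitely many instances (a faulty process invokes only finitely many, so after $t_0$ only correct processes are still invoking). For $\ell\ge k$: any process $p_i$ that returns a value to $\textit{proposeEC}_\ell$ does so after $t_0$, hence with $\mathit{leader}_i=p_c$, hence the returned value is the first value $p_i$ stored as $p_c$'s proposal to instance $\ell$; but $p_c$ broadcasts a single value $(\ell,v_c)$ to instance $\ell$ (by Integrity, $p_c$ proposes to $\ell$ at most once), so every process that returns to instance $\ell$ returns exactly $v_c$. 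That yields Agreement for all $\ell\ge k$.

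The main obstacle is purely bookkeeping rather than conceptual: making precise that the counters and invocation times line up so that ``eventually only correct processes invoke new instances'' and ``$p_c$ eventually reaches every instance'' are simultaneously justified — this requires the joint induction on instance index tying Termination and the choice of $k$ together, and care that a process that has crashed contributes only finitely many broadcasts so cannot be the source of a divergent late value once $p_c$ is the universal leader. Everything else (reliable-link retransmission, ``at most one proposal per instance per process'') is immediate from the model and the \EC~invocation discipline, so I would state it briefly and concentrate the write-up on the Termination/Agreement induction.
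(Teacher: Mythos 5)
Your algorithm and its verification are essentially identical to the paper's: the paper's Algorithm 5 has each process broadcast a $promote(v,\ell)$ message on invocation, store received proposals indexed by (process, instance), and decide the stored value of its current $\Omega$-leader, with the same property-by-property argument (Validity and Integrity unconditionally, Termination via the leader's progress through instances, Agreement by choosing $k$ so that instance $k$ is only invoked after $\Omega$ stabilizes). Your write-up is if anything slightly more explicit than the paper's about why such a $k$ exists and about the induction underlying Termination, but the route is the same.
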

\begin{proof}
We propose such an implementation in Algorithm~\ref{algo:EC}. Then, we
prove that any admissible run $r$ of the algorithm in any environment
$\mathcal{E}$
satisfies the \EC-Termination, \EC-Integrity, \EC-Agreement, and \EC-Validity properties.

Assume that a correct process never returns from an invocation of
$\textit{proposeEC}$ in $r$. Without loss of generality, denote by
$\ell$ the smallest integer such that a correct process $p_i$ never
returns from the invocation of $\textit{proposeEC}_\ell$. This implies
that $p_i$ always evaluates $received_i[\Omega_i,count_i]$ to
$\bot$. We know by definition of $\Omega$ that, eventually, $\Omega_i$
always returns the same correct process $p_j$. Hence, by construction
of $\ell$, $p_j$ returns from $\textit{proposeEC}_{0}$,...,
$\textit{proposeEC}_{\ell-1}$ and then sends the message
$promote(v,\ell)$ to all processes in a finite time. As $p_i$ and
$p_j$ are correct, $p_i$ receives this message and updates
$received_i[\Omega_i,count_i]$ to $v$ in a finite time. Therefore, the
algorithm satisfies the \EC-Termination property.

The update of the variable $count_i$ to $\ell$ for any process $p_i$
that invokes $\textit{proposeEC}_\ell$ and the assumptions on operations
$\textit{proposeEC}$ ensure us that $p_i$ executes at most once the
function \linebreak $DecideEC(\ell,received_i[\Omega_i,\ell])$. Hence, the
\EC-Integrity property is satisfied.

Let $\tau_\Omega$ be the time from which the local outputs of $\Omega$
are identical and constants for all correct processes in $r$. Let $k$
be the smallest integer such that any process that invokes
$\textit{proposeEC}_k$ in $r$ invokes it after $\tau_\Omega$.

Let $\ell$ be an integer such that $\ell\geq k$. Assume that $p_i$ and
$p_j$ are two processes that respond to $\textit{proposeEC}_\ell$. Then,
they respectively execute the function
$DecideEC(\ell,received_i[\Omega_i,\ell])$ and
$DecideEC(\ell,received_j[\Omega_j,\ell])$. By construction of $k$, we
can deduce that $\Omega_i=\Omega_j=p_l$. That implies that $p_i$ and
$p_j$ both received a message $promote(v,\ell)$ from $p_l$. As $p_l$
sends such a message at most once, we can deduce that
$received_i[p_l,\ell]=received_j[p_l,\ell]$, that proves that ensures
the \EC-Agreement property.

Let $\ell$ be an integer such that $\ell\geq k$. Assume that $p_i$ is
a process that respond to $\textit{proposeEC}_\ell$. The value returned
by $p_i$ was previously received from $\Omega_i$ in a message of type
$promote$. By construction of the protocol, $\Omega_i$ sends only one
message of this type and this latter contains the value proposed to
$\Omega_i$, hence, the \EC-Validity property is satisfied.

Thus, Algorithm~\ref{algo:EC} indeed implements \EC~in any
environment using $\Omega$.
\end{proof}

\begin{algorithm}
\caption{\EC~using $\Omega$: algorithm for process $p_i$}\label{algo:EC}
\small
\begin{description}\itemsep0pt
\item[Local variables:]~\\
$count_i$: integer (initially $0$) that stores the number of the last instances of consensus invoked by $p_i$\\
$received_i$: two dimensional tabular that stores a value for each pair of processes/integer (initially $\bot$)

\item[Functions:]~\\
$DecideEC(\ell,v)$ returns the value $v$ as a response to $\textit{proposeEC}_{\ell}$

\item[Messages:]~\\
$promote(v,\ell)$ with $v\in\{0,1\}$ and $\ell\in\mathbb{N}$

\item[On invocation of $\textit{proposeEC}_{\ell}(v)$]~\\
$count_i:=\ell$\\
$Send$ $promote(v,\ell)$ to all

\item[On reception of $promote(v,\ell)$ from $p_j$]~\\
$received_i[j,\ell]:=v$

\item[On local time out]~\\
If $received_i[\Omega_i,count_i]\neq\bot$ do\\
$\left\lfloor\begin{array}{l}
DecideEC(count_i,received_i[\Omega_i,count_i])
\end{array}\right.$

\end{description}
\normalsize
\end{algorithm}


\section{An Eventual Total Order Broadcast Algorithm}
\label{sec:ETOB}

We have shown in the previous section that $\Omega$ is the weakest
failure detector for the {\EC} abstraction (and, by Theorem~\ref{th:EquivalenceECETOB}, the
{\ETOB} abstraction) in any environment.
In this section, we describe an algorithm that directly implements \ETOB~using
$\Omega$ and which we believe is interesting in its own right. 

The algorithm has  three interesting properties.   First, it
needs only two communication steps to deliver any message when the leader does not change,  whereas algorithms implementing
classical \TOB~need at least three communication steps in this case.
Second,  the algorithm actually implements total order broadcast if
$\Omega$ outputs the same leader at all processes from the very beginning.
Third, the algorithm additionally ensures the property of  \TOB-Causal-Order,
which does not require more information about faults.

The intuition behind this algorithm is as follows. Every process that intends to \ETOB-broadcast
a message sends it to all other processes.
Each process $p_i$ has access to an $\Omega$ failure
detector module and maintains a DAG that stores the set of messages
delivered so far together with their causal dependencies.
As long as $p_i$ considers itself the leader (its module of $\Omega$
outputs $p_i$), it periodically sends to all processes a sequence of
messages computed from its DAG so that the sequence respects the causal order and admits the last delivered sequence as a prefix.
A process that receives a sequence of messages delivers it only if it has 
been sent by the current leader output by $\Omega$.
The correctness of this algorithm directly follows from the properties
of $\Omega$. Indeed, once all correct processes trust the same leader,
this leader promotes its own sequence of messages, which 
ensures the \ETOB~specification. 

The pseudocode of the algorithm  is given in Algorithm
\ref{algo:ETOB}).
Below we present the proof of its correctness,  including the proof that the algorithm
additionally ensures \TOB-Causal-Order.

\begin{algorithm}[t]
\caption{$\mathcal{ETOB}$: protocol for process $p_i$}\label{algo:ETOB}
\footnotesize
\begin{description}\itemsep0pt

\item[Output variable:]~\\
$d_i$: sequence of messages $m\in M$ (initially empty) output by $p_i$

\item[Internal variables:]~\\
$promote_i$: sequence of messages $m\in M$ (initially empty) promoted by $p_i$ when $\Omega_i=p_i$\\
$CG_i$: directed graph on messages of $M$ (initially empty) that contains causality dependencies known by $p_i$

\item[Messages:]~\\
$update(CG_i)$  with $CG_i$ a directed graph on messages of $M$\\
$promote(promote_i)$ with $promote_i$ a sequence of messages $m\in M$

\item[Functions:]~\\
$UpdateCG(m,C(m))$ adds the node $m$ and the set of edges $\{(m',m)|m'\in C(m)\}$ to $CG_i$\\
$UnionCG(CG_j)$ replaces $CG_i$ by the union of $CG_i$ and $CG_j$\\
$UpdatePromote()$ replaces $promote_i$ by one of the sequences of messages $s$ such that $promote_i$ is a prefix of $s$, $s$ contains once all messages of $CG_i$, and for every edge $(m_1,m_2)$ of $CG_i$, $m_1$ appears before $m_2$ in $s$

\item[On $broadcastETOB(m,C(m))$ from the application]~\\
$UpdateCG(m,C(m))$\\
$Send$ $update(CG_i)$ to all

\item[On reception of $update(CG_j)$ from $p_j$]~\\
$UnionCG(CG_j)$\\
$UpdatePromote()$

\item[On reception of $promote(promote_j)$ from $p_j$]~\\
If $\Omega_i=p_j$ then\\
$\left\lfloor\begin{array}{l}
d_i:=promote_j
\end{array}\right.$

\item[On local time out]~\\
If $\Omega_i=p_i$ then\\
$\left\lfloor\begin{array}{l}
Send \text{ } promote(promote_i) \text{ to all}
\end{array}\right.$

\end{description}
\normalsize
\end{algorithm}

\begin{lemma}\label{lem:etob}
In every environment $\E$, Algorithm $\mathcal{ETOB}$ implements \ETOB~using $\Omega$. 
\end{lemma}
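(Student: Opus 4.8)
The goal is to verify that Algorithm~$\mathcal{ETOB}$ satisfies all the \ETOB~properties: \TOB-Validity, \TOB-No-creation, \TOB-No-duplication, \TOB-Agreement, \ETOB-Stability, \ETOB-Total-order, and additionally \TOB-Causal-Order. The plan is to fix an admissible run $r$ of the algorithm using $\Omega$ in an arbitrary environment $\mathcal{E}$, let $\tau_\Omega$ be the (finite) time after which all correct processes permanently output the same correct leader $p_\ell$ (this exists by the defining property of $\Omega$ when $\correct(F)\neq\emptyset$; when $\correct(F)=\emptyset$ all properties are vacuous since they only constrain correct processes), and argue each property separately with $\tau:=\tau_\Omega$ (or a slightly larger time accounting for message delays, see below) playing the role of the stabilization time in the \ETOB~definition.

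\textbf{The easy properties first.} \TOB-No-creation and \TOB-No-duplication follow purely from the structure of $CG_i$ and $UpdatePromote()$: a node is added to $CG_i$ only via $UpdateCG(m,C(m))$ triggered by a $broadcastETOB$, so only genuinely broadcast messages ever enter any $d_i$; and $UpdatePromote()$ by definition produces a sequence containing each message of $CG_i$ exactly once, and $d_i$ is only ever assigned such a $promote_j$, so no message appears twice. \TOB-Causal-Order is also immediate: whenever a process does $broadcastETOB(m,C(m))$ the edges $(m',m)$ for $m'\in C(m)$ are inserted into $CG_i$, these edges propagate via $UnionCG$, and $UpdatePromote()$ always respects all edges of $CG_i$; since the causal-dependency relation $\to_r$ is exactly the transitive closure of the edges recorded in the $C(m)$ fields (I would spell out that a process's $C(m)$ records its current knowledge of causally-preceding messages), any $m_1\to_r m_2$ with both in $d_i(t)$ forces $m_1$ before $m_2$ in $d_i(t)$.

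\textbf{The liveness/agreement core.} For \TOB-Validity: if a correct process broadcasts $m$ at time $t$, then $update(CG_i)$ carrying $m$ reaches all correct processes, in particular $p_\ell$, which inserts $m$ into $CG_\ell$ and thereafter every $promote_\ell$ it computes contains $m$ (since $UpdatePromote()$ never drops nodes); after $\tau_\Omega$, $p_\ell$ periodically sends $promote(promote_\ell)$ to all, and every correct process, trusting $p_\ell$, sets its $d_i$ to a sequence containing $m$ — and, crucially, once it does so it never again adopts a sequence lacking $m$, because all future sequences it adopts come from $p_\ell$ and extend $promote_\ell$ which monotonically grows. This last observation is also exactly what gives \ETOB-Stability and \TOB-Agreement: after $\tau_\Omega$ (plus one message delay, so let $\tau$ be that time), the only writes to $d_i$ at a correct process are assignments $d_i:=promote_\ell$, and because $UpdatePromote()$ guarantees $promote_\ell$ is a prefix of its successor, the sequence of values $p_\ell$ promotes is prefix-monotone; hence each correct $p_i$'s $d_i$ is eventually prefix-monotone (\ETOB-Stability), and since all correct processes eventually adopt the same growing sequence from $p_\ell$, a message stably delivered by one correct process is eventually (stably) delivered by all (\TOB-Agreement). \ETOB-Total-order then follows: for $t\ge\tau$, if $m_1,m_2$ both appear in $d_i(t)$ and $d_j(t)$, these are both prefixes of (late-enough) promoted sequences of the single leader $p_\ell$, so their relative order agrees.

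\textbf{Main obstacle.} The delicate point — and the one I would treat most carefully — is the handedness of the transition between "before $\tau_\Omega$" and "after $\tau_\Omega$": a correct process may have adopted, before stabilization, a sequence $d_i$ that is \emph{not} a prefix of the leader's first post-stabilization promoted sequence, so a message could be removed from $d_i$ exactly once at the handover. This is harmless for \ETOB-Stability (which only demands prefix-monotonicity from some time $\tau$ onward, and we take $\tau$ after the handover) and for \TOB-Validity (which only asks for eventual \emph{stable} delivery). But I must make sure $p_\ell$'s promoted sequences are genuinely prefix-monotone \emph{across} the handover too — i.e. that $promote_\ell$ is never reset, only extended via $UpdatePromote()$ — which is clear from the code since $promote_\ell$ is initialized empty and only ever updated by $UpdatePromote()$, whose postcondition has $promote_i$ as a prefix of the new value. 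I would also note that $p_\ell$ may not have executed $UpdatePromote()$ "recently" at $\tau_\Omega$, but since $update$ messages are reliably delivered and $p_\ell$ is correct, all messages broadcast by correct processes eventually get into $CG_\ell$ and hence into $promote_\ell$, closing the validity argument. Assembling these observations with $\tau$ chosen as $\tau_\Omega$ plus the relevant message-delivery delays completes the proof.
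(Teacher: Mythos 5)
Your proof is correct and follows essentially the same route as the paper's: establish \TOB-Validity, \TOB-No-creation, \TOB-No-duplication and \TOB-Agreement from the structure of $CG_i$ and the monotone growth of the eventual leader's $promote$ sequence, then take $\tau$ to be $\tau_\Omega$ plus message/timeout delays and derive \ETOB-Stability, \ETOB-Total-order and \TOB-Causal-Order from the fact that after $\tau$ every $d_i$ is a snapshot of the single leader's prefix-monotone promoted sequence. Your explicit treatment of the handover at $\tau_\Omega$ is a welcome clarification but does not change the argument.
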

\begin{proof}
First, we prove that any run $r$ of $\mathcal{ETOB}$ in any environment $\mathcal{E}$ satisfies the \TOB-Validity, \TOB-No-creation, \TOB-No-duplication, and \TOB-Agreement properties.

Assume that a correct process $p_i$ broadcasts a message $m$ at time $t$ for a given $t\in\mathbb{N}$. We know that $\Omega$ outputs the same correct process $p_j$ to all correct processes in a finite time. As $p_j$ is correct, it receives the message $update(CG_i)$ from $p_i$ (that contains $m$) in a finite time. Then, $p_j$ includes $m$ in its causality graph (by a call to $UnionCG$) and in its promotion sequence (by a call to $UpdatePromote$). As $p_j$ never removes a message from its promotion sequence and is outputted by $\Omega$, $p_i$ adopts the promotion sequence of $p_j$ in a finite time and this sequence contains $m$, that proves that $\mathcal{ETOB}$ satisfies the \TOB-Validity property.

Any sequence outputted by any process is built by a call to $UpdatePromote$ by a process $p_i$.  This function ensures that any message appearing in the computed sequence appears in the graph $CG_p$. This graph is built by successive calls to $UnionCG$ that ensure that the graph contains only messages received in a message of type $update$. The construction of the protocol ensures us that such messages have been broadcast by a process. Then, we can deduce that $\mathcal{ETOB}$ satisfies the \TOB-No-creation property.

Any sequence outputted by any process is built by a call to $UpdatePromote$ that ensures that any message appears only once. Then, we can deduce that $\mathcal{ETOB}$ satisfies the \TOB-No-duplication property.

Assume that a correct process $p_i$ stably delivers a message $m$ at time $t$
for a given $t\in\mathbb{N}$. We know that $\Omega$ outputs the same
correct process $p_j$ to all correct processes after some finite
time. Since $m$ appears in every $d_i(t')$ such that $t'\geq t$, we
derive that $m$ appears infinitely in $promote_j$ from a given point
of the run. Hence, the construction of the protocol and the correctness
of $p_j$ implies that any correct process eventually stably delivers
$m$, and  $\mathcal{ETOB}$ satisfies the \TOB-Agreement property.

We now prove that, for any environment $\mathcal{E}$, for any run $r$ of $\mathcal{ETOB}$ in $\mathcal{E}$, there exists a $\tau\in\mathbb{N}$ satisfying \ETOB-Stability, \ETOB-Total-order, and \TOB-Causal-Order properties in $r$. Hence, let $r$ be a run of $\mathcal{ETOB}$ in an environment $\mathcal{E}$. Let us define:
\begin{itemize}
\item $\tau_\Omega$ the time from which the local outputs of $\Omega$ are identical and constant for all correct processes in $r$;
\item $\Delta_c$ the longest communication delay between two correct processes in $r$;
\item $\Delta_t$ the longest local timeout for correct processes in $r$;
\item $\tau=\tau_\Omega+\Delta_t+\Delta_c$
\end{itemize}

Let $p_i$ be a correct process and $p_j$ be the correct elected by $\Omega$ after $\tau_\Omega$. Let $t_1$ and $t_2$ be two integers such that $\tau\leq t_1\leq t_2$. As the output of $\Omega$ is stable after $\tau_\Omega$ and the choice of $\tau$ ensures us that $p_i$ receives at least one message of type $promote$ from $p_j$, we can deduce from the construction of the protocol that there exists $t_3\leq t_1$ and $t_4\leq t_2$ such that $d_i(t_1)=promote_j(t_3)$ and $d_i(t_2)=promote_j(t_4)$. But the function $UpdatePromote$ used to build $promote_j$ ensures that $promote_j(t_3)$ is a prefix of $promote_j(t_4)$. Then, $\mathcal{ETOB}$ satisfies the \ETOB-Stability property after time $\tau$.

Let $p_i$ and $p_j$ be two correct processes such that two messages $m_1$ and $m_2$ appear in $d_i(t)$ and $d_j(t)$ at time $t\geq \tau$. Assume that $m_1$ appears before $m_2$ in $d_i(t)$. Let $p_k$ be the correct elected by $\Omega$ after $\tau_\Omega$. As the output of $\Omega$ is stable after $\tau_\Omega$ and the choice of $\tau$ ensures us that $p_i$ and $p_j$ receive at least one message of type $promote$ from $p_j$, the construction of the protocol ensures us that we can consider $t_1$ and $t_2$ such that $d_i(t)=promote_k(t_1)$ and $d_j(t)=promote_k(t_2)$. The definition of the function $UpdatePromote$ executed by $p_k$ allows us to deduce that either $d_i(t)$ is a prefix of $d_j(t)$ or $d_j(t)$ is a prefix of $d_i(t)$. In both cases, we obtain that $m_1$ appears before $m_2$ in $d_j(t)$, that proves that $\mathcal{ETOB}$ satisfies the \ETOB-Total-order property after time $\tau$.

Let $p_i$ be a correct process such that two messages $m_1$ and $m_2$ appear in $d_i(t)$ at time $t\geq 0$. Assume that $m_1\in C(m_2)$ when $m_2$ is broadcast. Let $p_j$ be the process trusted by $\Omega_i$ at the time $p_i$ adopts the sequence $d_i(t)$. If $m_2$ appears in $d_i(t)$, that implies that the edge $(m_1,m_2)$ appears in $CG_j$ at the time $p_j$ executes $UpdatePromote$ (since $p_j$ previously executed $UnionCG$ that includes at least $m$ and the set of edges $\{(m',m)|m'\in C(m)\}$ in $CG_j$). The construction of $UpdatePromote$ ensures us that $m_1$ appears before $m_2$ in $d_i(t)$, that proves that $\mathcal{ETOB}$ satisfies the \TOB-Causal-Order property.

In conclusion, $\mathcal{ETOB}$ is an implementation of \ETOB~assuming that processes have access to the $\Omega$ failure detector in any environment.
\end{proof}

\section{Related Work}
\label{sec:related}

Modern data service providers such as Amazon's Dynamo
\cite{DeCandia2007},  Yahoo's PNUTs~\cite{Cooper2008} or Google
Bigtable distributed storage~\cite{Chang2008} are intended to offer
highly available services. They consequently replicate those services
over several server processes. In order to  tolerate process failures as well
as partitions, they consider eventual consistency~\cite{SS05,
  Vogels2009,Singh2009}. 
  
The term {\em eventual} consensus was introduced in~\cite{KMO11}. It
refers to one instance of consensus which stabilizes at the end; not multiple instances as we consider in this paper.
In~\cite{Dolev2010}, a self-stabilizing form of consensus was proposed: 
assuming a self-stabilizing implementation of $\diamond S$ (also described in the paper) and executing a sequence of consensus instances,   validity and  agreement are eventually ensured. Their consensus abstraction is close to ours but the authors focused on the shared-memory model and did not address the question of the weakest failure detector.

In~\cite{Fekete1996}, the intuition behind eventual consistency was
captured through the concept of eventual serializability. Two kinds of operations
were defined: (1) a
``stable'' operation of which response needs to be totally ordered
after  all operations preceding it and (2) ``weak'' operations of
which responses might not reflect all their preceding operations.
Our {\ETOB} abstraction captures consistency with respect to the
``weak'' operations. (Our lower bound on the necessity of 
$\Omega$ naturally extends to the stronger definitions.)

Our perspective on eventual consistency is closely related to the
notion of \emph{eventual linearizability} discussed recently
in~\cite{Serafini2010} and~\cite{GR14}.
It is shown in~\cite{Serafini2010} that the weakest failure detector
to boost eventually linearizable objects to linearizable ones is
$\Diamond P$.  We are focusing primarily on the weakest failure
detector to \emph{implement} eventual
consistency,
so their result is orthogonal to ours.


%
In~\cite{GR14}, eventual linearizability is
compared against linearizability in the context of implementing
specific objects in a shared-memory context.
It turns out that an eventually linearizable implementation of a \textit{fetch-and-increment}
object is as hard to achieve as a linearizable one.
Our {\ETOB} construction can be seen as an \emph{eventually
linearizable universal construction}: given any sequential
object type, {\ETOB} provides an eventually linearizable
concurrent implementation of it.
Brought to the message-passing environment with a correct majority,
our results complement~\cite{GR14}:
we show that in this setting, an eventually consistent replicated
service  (eventually linearizable object with a sequential
specification) requires exactly the same information about failures
as a consistent (linearizable) one.


\section{Concluding Remarks}

\label{sec:conclu}


This paper defined the abstraction of eventual total order broadcast and proved its
equivalence to eventual consensus: two fundamental building
blocks to implement a general replicated state machine that ensures eventual consistency.
We proved that the weakest failure detector to implement
these abstractions is $\Omega$, in any message-passing environment.
We could hence determine the gap between
building a general replicated state machine that ensures consistency in a message-passing system and one
that ensures only eventual consistency.
In terms of information about failures, this gap is precisely captured
by failure detector $\Sigma$~\cite{DFG10}. In terms of time
complexity, the gap is exactly one message delay: an operation on the strongly
consistent replicated must, in the worst case, incur three
communication steps~\cite{Lam06}, while one build using our eventually total order 
broadcast protocol  completes an operation in the optimal number of two communication steps.

Our {\ETOB} abstraction captures a form of eventual consistency
implemented in multiple replicated
services~\cite{DeCandia2007,Cooper2008,Chang2008}. In addition  to
eventual consistency guarantees, such systems sometimes produce
indications when a prefix of operations on the replicated service is
\emph{committed}, i.e., is not subject to further changes.
A prefix of operations can be committed, \emph{e.g.}, in sufficiently
long periods of synchrony, when a majority of correct processes elect
the same leader and  all incoming and outgoing messages of
the leader to the correct majority are delivered within some fixed bound.
We believe that such indications could easily be implemented, during
the stable periods, on top of {\ETOB}. 
Naturally, our results imply that $\Omega$ is necessary for such systems too. 

Our {\EC} abstraction assumes eventual agreement, but requires
integrity and validity to be always ensured.
Other definitions of eventual consensus could be considered. In
particular, we have studied an eventual
consensus abstraction assuming, instead of eventual agrement, \emph{eventual integrity}, \emph{i.e.}, a bounded number of decisions
in a given consensus instance could be revoked a finite
number of times. In Appendix~\ref{app:eic}, 
we define this abstraction 
of eventual \emph{irrevocable} consensus  (\EIC) more precisely and show 
that it is equivalent to our  {\EC} abstraction.






\def\noopsort#1{} \def\No{\kern-.25em\lower.2ex\hbox{\char'27}}
  \def\no#1{\relax} \def\http#1{{\\{\small\tt
  http://www-litp.ibp.fr:80/{$\sim$}#1}}}

\newpage

\appendix

\ignore{

\section{Omitted proofs}

\subsection{Proof of Theorem~\ref{th:EquivalenceECETOB}}\label{sec:EquivalenceECETOB}

\paragraph{From {\EC} to {\ETOB}}
To prove this result, it is sufficient to provide a protocol that implements \ETOB~in an environment $\mathcal{E}$ knowing that there exists a protocol that implements \EC~in this environment. This transformation protocol $\mathcal{T}_{\EC\rightarrow\ETOB}$ is stated in Algorithm \ref{algo:ECtoETOB}. Now, we are going to prove that $\mathcal{T}_{\EC\rightarrow\ETOB}$ implements \ETOB.

Assume that there exists a message $m$ broadcast by a correct process
$p_i$ at time $t$. As $p_i$ is correct, every correct process receives
the message $push(m)$ in a finite time. Then, $m$ appears in the set
$toDeliver$ of all correct processes in a finite time. Hence, by the
termination property of \EC~and the construction of the function
$NewBatch$, there exists $\ell$ such that $m$ is included in any
sequence submitted to $\textit{proposeEC}_\ell$. By the \EC-Validity
and the \EC-Termination properties, we deduce that $p_i$ stably delivers $m$ in a finite time, that proves that $\mathcal{T}_{\EC\rightarrow\ETOB}$ satisfies the \TOB-Validity property.

If a process $p_i$ delivers a message $m$ at time $t$, then $m$
appears in the sequence responded by its last invocation of
$\textit{proposeEC}_\ell$. By construction and by the \EC-Validity
property, this sequence contains only messages that appear in the set $toDeliver$ of a process $p_j$ at the time $p_j$ invokes $\textit{proposeEC}_\ell$. But this set is incrementally built at the reception of messages $push$ that contains only messages broadcast by a process. This implies that $\mathcal{T}_{\EC\rightarrow\ETOB}$ satisfies the \TOB-No-creation.

As the sequence outputted at any time by any process is the response to its last invocation of $\textit{proposeEC}$ and that the sequence submitted to any invocation of this primitive contains no duplicated message (by definition of the function $NewBatch$), we can deduce from the \EC-Validity property that $\mathcal{T}_{\EC\rightarrow\ETOB}$ satisfies the \TOB-No-duplication.

Assume that a correct process $p_i$ stably delivers a message $m$,
i.e., there exists a time after which $m$ always appears in $d_i$.
By the algorithm, $m$ always appears in the response of
$\textit{proposeEC}$ to $p_i$ after this time.
As \EC-Agreement property is eventually satisfied, we can deduce that
$m$ always appears in the response of $\textit{proposeEC}$ for any
correct process after some time. Thus, any correct process stably
delivers $m$, and $\mathcal{T}_{\EC\rightarrow\ETOB}$ satisfies the \TOB-Agreement.

Let $\tau$ be the time after which the \EC~primitive satisfies \EC-Agreement and \EC-Validity.

Let $p_i$ be a correct process and $\tau\leq t_1\leq t_2$. Let $\ell_1$ (respectively $\ell_2$) be the integer such that $d_i(t_1)$ (respectively $d_i(t_2)$) is the response of $\textit{proposeEC}_{\ell_1}$ (respectively $\textit{proposeEC}_{\ell_2}$). By construction of the protocol and the \EC-Agreement and \EC-Validity properties, we know that, after time $\tau$, the response of $\textit{proposeEC}_\ell$ to correct processes is a prefix of the response of $\textit{proposeEC}_{\ell+1}$. As we have $\ell_1\leq \ell_2$, we can deduce that $\mathcal{T}_{\EC\rightarrow\ETOB}$ satisfies the \ETOB-Stability property.

Let $p_i$ and $p_j$ be two correct processes such that two messages $m_1$ and $m_2$ appear in $d_i(t)$ and $d_j(t)$ at time $t\geq \tau$. Let $\ell$ be the smallest integer such that $m_1$ and $m_2$ appear in the response of $\textit{proposeEC}_\ell$. By the \EC-Agreement property, we know that the response of $\textit{proposeEC}_\ell$ is identical for all correct processes. Then , by the \ETOB-Stability property proved above, that implies that, if $m_1$ appears before $m_2$ in $d_i(t)$, then $m_1$ appears before $m_2$ in $d_j(t)$. In other words, $\mathcal{T}_{\EC\rightarrow\ETOB}$ satisfies the \ETOB-Total-order property.

In conclusion, $\mathcal{T}_{\EC\rightarrow\ETOB}$ satisfies the \ETOB~specification in an environment $\mathcal{E}$ provided that there exists a protocol that implements \EC~in this environment.

\begin{algorithm}
\caption{$\mathcal{T}_{\EC\rightarrow\ETOB}$: transformation from \EC~to \ETOB~for process $p_i$}\label{algo:ECtoETOB}
\small
\begin{description}\itemsep0pt
\item[Output variable:]~\\
$d_i$: sequence of messages of $M$ (initially empty) outputted at any time by $p_i$

\item[Internal variables:]~\\
$toDeliver_i$: set of messages of $M$ (initially empty) containing all messages received by $p_i$~\\
$count_i$: integer (initially $0$) that stores the number of the last instance of consensus invoked by $p_i$

\item[Messages:]~\\
$push(m)$  with $m$ a message of $M$

\item[Functions:]~\\
$Send(message)$ sends $message$ to all processes (including $p_i$)\\
$NewBatch(d_i,toDeliver_i)$ returns a sequence containing all messages from the set $toDeliver_i\setminus\{m|m\in d_i\}$

\item[On reception of $broadcastETOB(m)$ from the application]~\\
$Send(push(m))$

\item[On reception of $push(m)$ from $p_j$]~\\
$toDeliver_i:=toDeliver_i\cup\{m\}$

\item[On reception of $d$ as response of $\textit{proposeEC}_\ell$]~\\
$d_i:=d$\\
$count_i:=count_i+1$\\
$\textit{proposeEC}_{count_i}(d_i. NewBatch(d_i,toDeliver_i))$

\item[On local timeout]~\\
If $count_i=0$ then\\
$\left\lfloor\begin{array}{l}
count_i:=1\\
\textit{proposeEC}_{1}(NewBatch(d_i,toDeliver_i))
\end{array}\right.$

\end{description}
\normalsize
\end{algorithm}


\paragraph{From {\ETOB} to {\EC}}
To prove this result, it is sufficient to provide a protocol that
implements \EC~in an environment $\mathcal{E}$ given a protocol that
implements \ETOB~in this environment. This transformation protocol
$\mathcal{T}_{\ETOB\rightarrow\EC}$ is stated  in Algorithm
\ref{algo:ETOBtoEC}. Now, we are going to prove that $\mathcal{T}_{\ETOB\rightarrow\EC}$ implements \EC.

Let $p_i$ be a correct process that invokes $\textit{proposeEC}_{\ell}(v)$ with $\ell\in\mathbb{N}$. Then, by fairness and the \TOB-Validity property, the construction of the protocol implies that the \ETOB~primitive delivers the message $(\ell,v)$ to $p_i$ in a finite time. By the use of the local timeout, we know that $p_i$ returns from $\textit{proposeEC}_{\ell}(v)$ in a finite time, that proves that $\mathcal{T}_{\ETOB\rightarrow\EC}$ satisfies the \EC-Termination property.

The update of the variable $count_i$ to $\ell$ for any process $p_i$ that invokes $\textit{proposeEC}_\ell$ and the assumptions on operations $\textit{proposeEC}$ ensure us that $p_i$ executes at most once the function\linebreak $DecideEC(\ell,received_i[\Omega_i,\ell])$. Hence, $\mathcal{T}_{\ETOB\rightarrow\EC}$ satisfies the \EC-Integrity property.

Let $\tau$ be the time after which the \ETOB-Stability and the \ETOB-Total-order properties are satisfied. Let $k$ be the smallest integer such that any process that invokes $\textit{proposeEC}_k$ in run $r$ invokes it after $\tau$.

If we assume that there exist two correct processes $p_i$ and $p_j$ that return different values to $\textit{proposeEC}_{\ell}$ with $\ell\geq k$, we obtain a contradiction with the \ETOB-Stability, \ETOB-Total-order, or \TOB-Agreement property. Indeed, if $p_i$ returns a value after time $\tau$, that implies that this value appears in $d_i$ and then, by the \TOB-Agreement property, this value eventually appears in $d_j$. If $p_j$ returns a different value from $p_i$, that implies that this value is the first occurrence of a message associated to $\textit{proposeEC}_{\ell}$ in $d_j$ at the time of the return of $\textit{proposeEC}_{\ell}$. After that, $d_j$ cannot satisfy simultaneously the \ETOB-Stability and the \ETOB-Total-order properties. This contradiction shows that $\mathcal{T}_{\ETOB\rightarrow\EC}$ satisfies the \EC-Agreement property.

If we assume that there exists a process $p_i$ that returns to $\textit{proposeEC}_{\ell}$ with $\ell\in\mathbb{N}$ a value that was not proposed to $\textit{proposeEC}_{\ell}$, we obtain a contradiction with the \TOB-No-creation property. Indeed, the return of $p_i$ from $\textit{proposeEC}_{\ell}$ is chosen in $d_i$ that contains the output of the \ETOB~primitive and processes broadcast only proposed values. This contradiction shows that $\mathcal{T}_{\ETOB\rightarrow\EC}$ satisfies the \EC-Validity property.

In conclusion, $\mathcal{T}_{\ETOB\rightarrow\EC}$ satisfies the \EC~specification in an environment $\mathcal{E}$ provided that there exists a protocol that implements \ETOB~in this environment.

\begin{algorithm}
\caption{$\mathcal{T}_{\ETOB\rightarrow\EC}$: transformation from \ETOB~to \EC~for process $p_i$}\label{algo:ETOBtoEC}
\small
\begin{description}\itemsep0pt
\item[Internal variables:]~\\
$count_i$: integer (initially $0$) that stores the number of the last instances of consensus invoked by $p_i$\\
$d_i$: sequence of messages (initially empty) outputted to $p_i$ by the \ETOB~primitive

\item[Functions:]~\\
$First(\ell)$: returns the value $v$ such that $(\ell,v)$ is the first message of the form $(\ell,*)$ in $d_i$ if such messages exist, $\bot$ otherwise\\
$DecideEC(\ell,v)$: returns the value $v$ as response to $\textit{proposeEC}_{\ell}$

\item[On invocation of $\textit{proposeEC}_{\ell}(v)$]~\\
$count_i:=\ell$\\
$broadcastETOB((\ell,v))$

\item[On local time out]~\\
If $First(count_i)\neq\bot$ then\\
$\left\lfloor\begin{array}{l}
DecideEC(count_i,First(count_i))
\end{array}\right.$
\end{description}
\normalsize
\end{algorithm}

\subsection{Proof of Lemma \ref{lem:SufficiencyOmegaEC}}\label{sec:SufficiencyOmegaEC}

\begin{proof}
We propose such an implementation in Algorithm~\ref{algo:EC}. Then, we
prove that any admissible run $r$ of the algorithm in any environment
$\mathcal{E}$
satisfies the \EC-Termination, \EC-Integrity, \EC-Agreement, and \EC-Validity properties.

Assume that a correct process never returns from an invocation of
$\textit{proposeEC}$ in $r$. Without loss of generality, denote by
$\ell$ the smallest integer such that a correct process $p_i$ never
returns from the invocation of $\textit{proposeEC}_\ell$. This implies
that $p_i$ always evaluates $received_i[\Omega_i,count_i]$ to
$\bot$. We know by definition of $\Omega$ that, eventually, $\Omega_i$
always returns the same correct process $p_j$. Hence, by construction
of $\ell$, $p_j$ returns from $\textit{proposeEC}_{0}$,...,
$\textit{proposeEC}_{\ell-1}$ and then sends the message
$promote(v,\ell)$ to all processes in a finite time. As $p_i$ and
$p_j$ are correct, $p_i$ receives this message and updates
$received_i[\Omega_i,count_i]$ to $v$ in a finite time. Therefore, the
algorithm satisfies the \EC-Termination property.

The update of the variable $count_i$ to $\ell$ for any process $p_i$
that invokes $\textit{proposeEC}_\ell$ and the assumptions on operations
$\textit{proposeEC}$ ensure us that $p_i$ executes at most once the
function \linebreak $DecideEC(\ell,received_i[\Omega_i,\ell])$. Hence, the
\EC-Integrity property is satisfied.

Let $\tau_\Omega$ be the time from which the local outputs of $\Omega$
are identical and constants for all correct processes in $r$. Let $k$
be the smallest integer such that any process that invokes
$\textit{proposeEC}_k$ in $r$ invokes it after $\tau_\Omega$.

Let $\ell$ be an integer such that $\ell\geq k$. Assume that $p_i$ and
$p_j$ are two processes that respond to $\textit{proposeEC}_\ell$. Then,
they respectively execute the function
$DecideEC(\ell,received_i[\Omega_i,\ell])$ and
$DecideEC(\ell,received_j[\Omega_j,\ell])$. By construction of $k$, we
can deduce that $\Omega_i=\Omega_j=p_l$. That implies that $p_i$ and
$p_j$ both received a message $promote(v,\ell)$ from $p_l$. As $p_l$
sends such a message at most once, we can deduce that
$received_i[p_l,\ell]=received_j[p_l,\ell]$, that proves that ensures
the \EC-Agreement property.

Let $\ell$ be an integer such that $\ell\geq k$. Assume that $p_i$ is
a process that respond to $\textit{proposeEC}_\ell$. The value returned
by $p_i$ was previously received from $\Omega_i$ in a message of type
$promote$. By construction of the protocol, $\Omega_i$ sends only one
message of this type and this latter contains the value proposed to
$\Omega_i$, hence, the \EC-Validity property is satisfied.

Thus, Algorithm~\ref{algo:EC} indeed implements \EC~in any
environment using $\Omega$.
\end{proof}

\begin{algorithm}
\caption{\EC~using $\Omega$: algorithm for process $p_i$}\label{algo:EC}
\small
\begin{description}\itemsep0pt
\item[Local variables:]~\\
$count_i$: integer (initially $0$) that stores the number of the last instances of consensus invoked by $p_i$\\
$received_i$: two dimensional tabular that stores a value for each pair of processes/integer (initially $\bot$)

\item[Functions:]~\\
$DecideEC(\ell,v)$ returns the value $v$ as a response to $\textit{proposeEC}_{\ell}$

\item[Messages:]~\\
$promote(v,\ell)$ with $v\in\{0,1\}$ and $\ell\in\mathbb{N}$

\item[On invocation of $\textit{proposeEC}_{\ell}(v)$]~\\
$count_i:=\ell$\\
$Send$ $promote(v,\ell)$ to all

\item[On reception of $promote(v,\ell)$ from $p_j$]~\\
$received_i[j,\ell]:=v$

\item[On local time out]~\\
If $received_i[\Omega_i,count_i]\neq\bot$ do\\
$\left\lfloor\begin{array}{l}
DecideEC(count_i,received_i[\Omega_i,count_i])
\end{array}\right.$

\end{description}
\normalsize
\end{algorithm}

\subsection{Proof of Correctness of Algorithm \ref{algo:ETOB}}\label{sub:ETOB}

\begin{algorithm}[t]
\caption{$\mathcal{ETOB}$: protocol for process $p_i$}\label{algo:ETOB}
\footnotesize
\begin{description}\itemsep0pt

\item[Output variable:]~\\
$d_i$: sequence of messages $m\in M$ (initially empty) output by $p_i$

\item[Internal variables:]~\\
$promote_i$: sequence of messages $m\in M$ (initially empty) promoted by $p_i$ when $\Omega_i=p_i$\\
$CG_i$: directed graph on messages of $M$ (initially empty) that contains causality dependencies known by $p_i$

\item[Messages:]~\\
$update(CG_i)$  with $CG_i$ a directed graph on messages of $M$\\
$promote(promote_i)$ with $promote_i$ a sequence of messages $m\in M$

\item[Functions:]~\\
$UpdateCG(m,C(m))$ adds the node $m$ and the set of edges $\{(m',m)|m'\in C(m)\}$ to $CG_i$\\
$UnionCG(CG_j)$ replaces $CG_i$ by the union of $CG_i$ and $CG_j$\\
$UpdatePromote()$ replaces $promote_i$ by one of the sequences of messages $s$ such that $promote_i$ is a prefix of $s$, $s$ contains once all messages of $CG_i$, and for every edge $(m_1,m_2)$ of $CG_i$, $m_1$ appears before $m_2$ in $s$

\item[On reception of $broadcastETOB(m,C(m))$ from the application]~\\
$UpdateCG(m,C(m))$\\
$Send$ $update(CG_i)$ to all

\item[On reception of $update(CG_j)$ from $p_j$]~\\
$UnionCG(CG_j)$\\
$UpdatePromote()$

\item[On reception of $promote(promote_j)$ from $p_j$]~\\
If $\Omega_i=p_j$ then\\
$\left\lfloor\begin{array}{l}
d_i:=promote_j
\end{array}\right.$

\item[On local time out]~\\
If $\Omega_i=p_i$ then\\
$\left\lfloor\begin{array}{l}
Send \text{ } promote(promote_i) \text{ to all}
\end{array}\right.$

\end{description}
\normalsize
\end{algorithm}

\begin{proof}
First, we prove that any run $r$ of $\mathcal{ETOB}$ in any environment $\mathcal{E}$ satisfies the \TOB-Validity, \TOB-No-creation, \TOB-No-duplication, and \TOB-Agreement properties.

Assume that a correct process $p_i$ broadcasts a message $m$ at time $t$ for a given $t\in\mathbb{N}$. We know that $\Omega$ outputs the same correct process $p_j$ to all correct processes in a finite time. As $p_j$ is correct, it receives the message $update(CG_i)$ from $p_i$ (that contains $m$) in a finite time. Then, $p_j$ includes $m$ in its causality graph (by a call to $UnionCG$) and in its promotion sequence (by a call to $UpdatePromote$). As $p_j$ never removes a message from its promotion sequence and is outputted by $\Omega$, $p_i$ adopts the promotion sequence of $p_j$ in a finite time and this sequence contains $m$, that proves that $\mathcal{ETOB}$ satisfies the \TOB-Validity property.

Any sequence outputted by any process is built by a call to $UpdatePromote$ by a process $p_i$.  This function ensures that any message appearing in the computed sequence appears in the graph $CG_p$. This graph is built by successive calls to $UnionCG$ that ensure that the graph contains only messages received in a message of type $update$. The construction of the protocol ensures us that such messages have been broadcast by a process. Then, we can deduce that $\mathcal{ETOB}$ satisfies the \TOB-No-creation property.

Any sequence outputted by any process is built by a call to $UpdatePromote$ that ensures that any message appears only once. Then, we can deduce that $\mathcal{ETOB}$ satisfies the \TOB-No-duplication property.

Assume that a correct process $p_i$ stably delivers a message $m$ at time $t$
for a given $t\in\mathbb{N}$. We know that $\Omega$ outputs the same
correct process $p_j$ to all correct processes after some finite
time. Since $m$ appears in every $d_i(t')$ such that $t'\geq t$, we
derive that $m$ appears infinitely in $promote_j$ from a given point
of the run. Hence, the construction of the protocol and the correctness
of $p_j$ implies that any correct process eventually stably delivers
$m$, and  $\mathcal{ETOB}$ satisfies the \TOB-Agreement property.

We now prove that, for any environment $\mathcal{E}$, for any run $r$ of $\mathcal{ETOB}$ in $\mathcal{E}$, there exists a $\tau\in\mathbb{N}$ satisfying \ETOB-Stability, \ETOB-Total-order, and \TOB-Causal-Order properties in $r$. Hence, let $r$ be a run of $\mathcal{ETOB}$ in an environment $\mathcal{E}$. Let us define:
\begin{itemize}
\item $\tau_\Omega$ the time from which the local outputs of $\Omega$ are identical and constant for all correct processes in $r$;
\item $\Delta_c$ the longest communication delay between two correct processes in $r$;
\item $\Delta_t$ the longest local timeout for correct processes in $r$;
\item $\tau=\tau_\Omega+\Delta_t+\Delta_c$
\end{itemize}

Let $p_i$ be a correct process and $p_j$ be the correct elected by $\Omega$ after $\tau_\Omega$. Let $t_1$ and $t_2$ be two integers such that $\tau\leq t_1\leq t_2$. As the output of $\Omega$ is stable after $\tau_\Omega$ and the choice of $\tau$ ensures us that $p_i$ receives at least one message of type $promote$ from $p_j$, we can deduce from the construction of the protocol that there exists $t_3\leq t_1$ and $t_4\leq t_2$ such that $d_i(t_1)=promote_j(t_3)$ and $d_i(t_2)=promote_j(t_4)$. But the function $UpdatePromote$ used to build $promote_j$ ensures that $promote_j(t_3)$ is a prefix of $promote_j(t_4)$. Then, $\mathcal{ETOB}$ satisfies the \ETOB-Stability property after time $\tau$.

Let $p_i$ and $p_j$ be two correct processes such that two messages $m_1$ and $m_2$ appear in $d_i(t)$ and $d_j(t)$ at time $t\geq \tau$. Assume that $m_1$ appears before $m_2$ in $d_i(t)$. Let $p_k$ be the correct elected by $\Omega$ after $\tau_\Omega$. As the output of $\Omega$ is stable after $\tau_\Omega$ and the choice of $\tau$ ensures us that $p_i$ and $p_j$ receive at least one message of type $promote$ from $p_j$, the construction of the protocol ensures us that we can consider $t_1$ and $t_2$ such that $d_i(t)=promote_k(t_1)$ and $d_j(t)=promote_k(t_2)$. The definition of the function $UpdatePromote$ executed by $p_k$ allows us to deduce that either $d_i(t)$ is a prefix of $d_j(t)$ or $d_j(t)$ is a prefix of $d_i(t)$. In both cases, we obtain that $m_1$ appears before $m_2$ in $d_j(t)$, that proves that $\mathcal{ETOB}$ satisfies the \ETOB-Total-order property after time $\tau$.

Let $p_i$ be a correct process such that two messages $m_1$ and $m_2$ appear in $d_i(t)$ at time $t\geq 0$. Assume that $m_1\in C(m_2)$ when $m_2$ is broadcast. Let $p_j$ be the process trusted by $\Omega_i$ at the time $p_i$ adopts the sequence $d_i(t)$. If $m_2$ appears in $d_i(t)$, that implies that the edge $(m_1,m_2)$ appears in $CG_j$ at the time $p_j$ executes $UpdatePromote$ (since $p_j$ previously executed $UnionCG$ that includes at least $m$ and the set of edges $\{(m',m)|m'\in C(m)\}$ in $CG_j$). The construction of $UpdatePromote$ ensures us that $m_1$ appears before $m_2$ in $d_i(t)$, that proves that $\mathcal{ETOB}$ satisfies the \TOB-Causal-Order property.

In conclusion, $\mathcal{ETOB}$ is an implementation of \ETOB~assuming that processes have access to the $\Omega$ failure detector in any environment.
\end{proof}
}

\section{Discussion on Eventual Consensus}
\label{app:eic}

Our definition of Eventual Consensus \EC~relaxes the Agreement property which holds after a finite number of operations. We could instead relax the Integrity property where processes can change their decisions a finite number of times. We discuss here the resulting abstraction.

\subsection{Eventual Irrevocable Consensus (\EIC)}

The \emph{eventual irrevocable consensus} (\EIC) abstraction exports, to every process $p_i$, operations $\textit{proposeEIC}_{0}$, $\textit{proposeEIC}_{1}$, $\ldots$ that take binary arguments and return binary responses. If a process $p_i$ responds more than once to $\textit{proposeEIC}_{\ell}$ for some $\ell\in\mathbb{N}$, we consider that the response of $p_i$ to $\textit{proposeEIC}_{\ell}$ at time $t\in\mathbb{N}$ is its last response to $\textit{proposeEIC}_{\ell}$ before $t$.

Assuming that every process receives $\textit{proposeEIC}_{\ell}$ as soon as it returns a (first) response to $\textit{proposeEIC}_{\ell-1}$ for all $\ell \in\mathbb{N}$, the abstraction guarantees, for every run, there exists $k\in\mathbb{N}$ such that the following properties are satisfied:

\begin{description}\itemsep0pt
\item[\EIC-Termination] Every correct process eventually returns a response to $\textit{proposeEIC}_{\ell}$ for all $\ell\in\mathbb{N}$.
\item[\EIC-Integrity] No process responds twice to $\textit{proposeEIC}_{\ell}$ for all $\ell\geq k$.
\item[\EIC-Agreement] No two processes return infinitely different values to $\textit{proposeEIC}_{\ell}$ for any $\ell\in\mathbb{N}$.
\item[\EIC-Validity] Every value returned to $\textit{proposeEIC}_{j}$ was previously proposed to $\textit{proposeEIC}_{j}$ for all $j\in\mathbb{N}$.
\end{description}

\begin{theorem}\label{th:EquivalenceECEIC}
In every environment $\mathcal{E}$, \EC~and \EIC~are equivalent.
\end{theorem}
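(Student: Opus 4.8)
The plan is to give two transformations, one in each direction, and in both to exploit the structural pattern already present in \EC: only a bounded (but unknown) prefix of the consensus instances may fail the relevant ``good'' property. I would begin with the easy direction, \emph{from \EIC to \EC}. A process implements $\textit{proposeEC}_{j}(v)$ simply by invoking $\textit{proposeEIC}_{j-1}(v)$ and returning the \emph{first} response it gets from that \EIC instance, discarding all later (revised) responses. Since an \EC client invokes $\textit{proposeEC}_{j}$ only after a response to $\textit{proposeEC}_{j-1}$, relaying these invocations reproduces exactly the invocation discipline \EIC assumes, and returning the first response immediately gives \EC-Termination (from \EIC-Termination), \EC-Integrity (we answer each instance once by construction) and \EC-Validity (from \EIC-Validity). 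For \EC-Agreement, let $k$ be the threshold of the underlying \EIC run; for $j-1\ge k$, \EIC-Integrity says every process answers instance $j-1$ at most once, so that unique answer is also its stable ``final'' answer, and \EIC-Agreement then forces all these answers to coincide. Hence \EC-Agreement holds from instance $k+1$ onward.

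The interesting direction is \emph{from \EC to \EIC}, which is harder because \EIC demands eventual agreement on \emph{every} instance while \EC provides agreement only on a cofinite set of instances. The idea is to run an unbounded sequence of \EC instances and dovetail them into ``attempts'' at \EIC instances: fix a bijection $\phi$ from $\{1,2,\dots\}$ onto $\{(\ell,r):\ell,r\ge 1\}$ enumerating pairs along antidiagonals, so that the first-attempt index $n_{\ell}:=\phi^{-1}(\ell,1)$ is the $\ell$-th triangular number, hence $n_{\ell}\to\infty$, and so that for each fixed $\ell$ the indices $\phi^{-1}(\ell,r)$ also grow without bound in $r$. In \EC instance $j$ with $\phi(j)=(\ell,r)$, process $p_i$ proposes its own input to $\textit{proposeEIC}_{\ell}$ when $r=1$, and its current estimate $d_i[\ell]$ (the value returned by the $(r-1)$-st attempt on $\ell$) when $r>1$; when that \EC instance returns $w$, $p_i$ sets $d_i[\ell]:=w$ and issues $w$ as a response to $\textit{proposeEIC}_{\ell}$ \emph{only if} $w$ differs from its previous response to $\ell$ (so the first attempt always produces a response, but repeated attempts that do not change the value do not). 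One first checks the schedule is well defined and acyclic: by the time $p_i$ reaches \EC instance $n_{\ell}$, instance $n_{\ell-1}<n_{\ell}$ has already returned, so the first response to $\textit{proposeEIC}_{\ell-1}$ has been delivered and the application has already invoked $\textit{proposeEIC}_{\ell}$, giving $p_i$ the input it must propose; \EC-Termination then ensures every attempt eventually returns at every correct process, yielding \EIC-Termination, and an induction on $r$ using \EC-Validity gives \EIC-Validity.

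For the last two properties, let $k$ be the \EC threshold. If $n_{\ell}\ge k$, the first attempt on $\ell$ is an \EC instance satisfying \EC-Agreement, so all processes obtain the same $w$; from the next attempt on, every process proposes $w$, so by \EC-Validity they keep deciding $w$ and never change their response again, i.e.\ each process responds to $\textit{proposeEIC}_{\ell}$ exactly once. Since $n_{\ell}\to\infty$ this happens for all but finitely many $\ell$, which is \EIC-Integrity. For an instance $\ell$ with $n_{\ell}<k$, there is still an attempt $R_{\ell}$ with $\phi^{-1}(\ell,R_{\ell})\ge k$; from that attempt on all processes decide a common value, which then stabilizes as above, so every process's response to $\textit{proposeEIC}_{\ell}$ converges to that common value, giving \EIC-Agreement for every $\ell$. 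I expect the main obstacle to be exactly this dovetailing: arranging the interleaving so that (i) the only places where agreement or single-response can fail form a single finite, albeit unknown, prefix of \EIC instances, and (ii) the apparent circular dependency between the \EC invocation schedule and the \EIC invocation schedule is genuinely acyclic, so that the proposal each process needs is always already available when it reaches the corresponding \EC instance.
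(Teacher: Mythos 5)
Your proposal is correct. The \EIC-to-\EC\ direction coincides with the paper's Lemma~\ref{lem:EICtoEC}: relay each invocation to the corresponding \EIC\ instance, keep only the first response, and derive \EC-Agreement from \EIC-Integrity (which makes the first response the final one for all sufficiently large instance indices) combined with \EIC-Agreement. The \EC-to-\EIC\ direction, however, is a genuinely different construction from the paper's Lemma~\ref{lem:ECtoEIC}. The paper keeps a one-to-one correspondence between \EIC\ and \EC\ instances but makes each proposal carry the \emph{entire sequence} of decisions taken so far (relying on the multivalued version of \EC); when a later, post-threshold \EC\ instance returns a sequence disagreeing with the stored one at position $k$, the process retroactively re-decides \EIC\ instance $k$. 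Your scheme instead dovetails infinitely many single-valued \EC\ ``attempts'' per \EIC\ instance and re-issues a response only when an attempt changes the current estimate. Both arguments hinge on the same fact---after the unknown \EC\ threshold every relevant instance agrees, so each \EIC\ instance is eventually corrected to a common value and then frozen by \EC-Validity---and both respect the invocation-scheduling discipline (the paper trivially, you via $n_{\ell-1}<n_{\ell}$). What your route buys is that proposals stay binary, so the reduction does not need the multivalued extension of \EC; what it costs is the bookkeeping of the pairing function and unboundedly many \EC\ instances per \EIC\ instance, which is immaterial for an equivalence result but makes the correctness argument (notably \EIC-Agreement for the finitely many instances whose first attempt falls below the threshold) somewhat longer than the paper's.
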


\subsection{Transformation from \EC~to \EIC}

\begin{lemma}\label{lem:ECtoEIC}
In every environment $\mathcal{E}$, there exists a transformation from \EC~to \EIC.
\end{lemma}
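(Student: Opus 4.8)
The plan is to build a transformation $\mathcal{T}_{\EC\to\EIC}$ in which, for every \EIC~instance, one keeps re-running consensus on the current estimate until the \EC~instances start agreeing, and then a freezing argument pins the common value forever. Concretely, fix an enumeration $(\ell_n)_{n\ge1}$ of $\mathbb{N}$ in which every value occurs infinitely often and in which the first index $m_\ell$ with $\ell_{m_\ell}=\ell$ is strictly increasing in $\ell$ (e.g.\ list the pairs $(\ell,r)\in\mathbb{N}\times\mathbb{N}_{\ge1}$ by increasing $\ell+r$ and let $\ell_n$ be the first component of the $n$-th pair, so $m_\ell=\ell(\ell+1)/2+1\to\infty$). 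Each process $p_i$ drives the \EC~black box sequentially --- invoking $\textit{proposeEC}_{n}$ as soon as $\textit{proposeEC}_{n-1}$ returns --- and keeps $\textit{est}_i[\cdot]\in\{0,1,\bot\}$, initially $\bot$. On $\textit{proposeEIC}_\ell(v)$ it sets $\textit{est}_i[\ell]:=v$ if $\textit{est}_i[\ell]=\bot$. To $\textit{proposeEC}_n$ it proposes $\textit{est}_i[\ell_n]$. When $\textit{proposeEC}_n$ returns $w$ (about $\ell:=\ell_n$): if $w\ne\bot$ it sets $\textit{est}_i[\ell]:=w$, and if $p_i$ has been invoked on $\textit{proposeEIC}_\ell$ and $w$ differs from its last response there (or it has not yet responded), it returns $w$ to $\textit{proposeEIC}_\ell$. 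Between returning from $\textit{proposeEC}_{n-1}$ and invoking $\textit{proposeEC}_n$, $p_i$ first performs all pending \EIC-response propagation (so the environment delivers the next $\textit{proposeEIC}$); this is only finitely many local steps and does not disturb the \EC~box.

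First I would prove a \textbf{pacing} property by induction on $\ell$: (a) every alive process is invoked on $\textit{proposeEIC}_\ell$ before it invokes any \EC~instance about $\ell$, and consequently (b) every \EC~instance about $\ell$ decides a value $\ne\bot$. The base case $\ell=0$ holds because $\textit{proposeEIC}_0$ is delivered to every process initially, so all proposals to \EC~instances about $0$ are non-$\bot$ and \EC-Validity makes their decisions non-$\bot$. For the step, the inductive hypothesis for $\ell-1$ makes every alive process respond to $\textit{proposeEIC}_{\ell-1}$ the moment it processes $\textit{proposeEC}_{m_{\ell-1}}$ (the first, hence non-$\bot$, \EC~instance about $\ell-1$), hence be invoked on $\textit{proposeEIC}_\ell$; since $m_{\ell-1}<m_\ell$ this occurs before it invokes $\textit{proposeEC}_{m_\ell}$, so all proposals to every \EC~instance about $\ell$ are again non-$\bot$. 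In particular the sentinel $\bot$ is never decided. \EIC-Termination then follows by induction on $\ell$ from \EC-Termination: every correct process returns from every \EC~instance, in particular from $\textit{proposeEC}_{m_\ell}$, and thereby returns a response to $\textit{proposeEIC}_\ell$.

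Then I would check the remaining properties. For \EIC-Validity, a value returned to $\textit{proposeEIC}_\ell$ is, by \EC-Validity, a value proposed to the corresponding \EC~instance, i.e.\ $\textit{est}_j[\ell]$ of some $p_j$ at that moment; unfolding, $\textit{est}_j[\ell]$ is either $p_j$'s argument to $\textit{proposeEIC}_\ell$ or the decision of an earlier \EC~instance about $\ell$, so by induction it is some process's argument to $\textit{proposeEIC}_\ell$, proposed earlier than the response. For the agreement properties let $k$ be the \EC-agreement threshold. \textbf{Freezing:} if an \EC~instance about $\ell$ with index $\ge k$ decides $w$, then every alive process returns from it --- thereby setting $\textit{est}[\ell]:=w$ --- before invoking the next \EC~instance about $\ell$; so all proposals there equal $w$ and, by \EC-Validity, it too decides $w$; inductively every later \EC~instance about $\ell$ decides $w$ at every correct process. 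Since only finitely many \EC~instances about $\ell$ have index $<k$, for every correct $p_i$ the value $\textit{est}_i[\ell]$ converges to one value $W_\ell$ common to all correct processes, whence their responses to $\textit{proposeEIC}_\ell$ stabilize at $W_\ell$ --- this is \EIC-Agreement. For \EIC-Integrity, pick $\ell^*$ with $m_{\ell^*}\ge k$; for $\ell\ge\ell^*$ every \EC~instance about $\ell$ has index $\ge m_\ell\ge k$, so all decide consistently and, by freezing, already decide the constant non-$\bot$ value $W_\ell$ from $\textit{proposeEC}_{m_\ell}$ on; hence each correct process responds to $\textit{proposeEIC}_\ell$ exactly once, when it processes $\textit{proposeEC}_{m_\ell}$. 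Taking the witness $k$ of the \EIC~specification to be $\ell^*$ then makes all four \EIC~properties hold.

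The hard part will be the pacing property together with freezing: both hinge on reasoning about the interleaving of the strictly sequential progression of the \EC~instances --- invoked in lockstep order by all processes but at different real times --- with the cascade of $\textit{proposeEIC}$ invocations, each triggered by a response. One must be sure that (i) the cascade never lags behind the \EC~sequence, so that $\textit{est}[\ell]$ is defined whenever it is proposed and the sentinel is never decided, and (ii) once \EC-agreement has taken hold, every process's proposal to the next relevant \EC~instance already carries the previous decision, pinning the value forever. A minor point, needed only to license inserting \EIC-response propagation between two consecutive \EC~invocations, is reading ``invoke $\textit{proposeEC}_n$ as soon as $\textit{proposeEC}_{n-1}$ returns'' as allowing a bounded number of intervening local steps; this is harmless.
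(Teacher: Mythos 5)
Your proof is correct, but it takes a genuinely different route from the paper's. The paper keeps a one-to-one correspondence between \EIC~and \EC~instances: on $\textit{proposeEIC}_\ell(v)$ it invokes $\textit{proposeEC}_\ell$ on the \emph{entire sequence} $decision_i.v$ of decisions taken so far, and when the response arrives it scans the returned sequence coordinate by coordinate, re-issuing $DecideEIC(k,\cdot)$ wherever the returned sequence disagrees with the locally stored one. Revisions of earlier \EIC~decisions thus come ``for free'' from comparing sequences, and once \EC-Agreement holds the returned sequences coincide and grow by prefix extension, which immediately gives \EIC-Integrity and \EIC-Agreement. You instead dedicate \emph{infinitely many} \EC~instances to each \EIC~instance via a diagonal enumeration, repeatedly re-propose the current estimate, and prove a pacing invariant (the $\textit{proposeEIC}$ cascade never lags behind the \EC~sequence, so the sentinel is never proposed) plus a freezing lemma (once \EC-Agreement holds, each decided value is re-adopted by all correct processes before the next instance on the same coordinate, so by \EC-Validity it is decided forever after). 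What your approach buys is that it only needs \EC~over the base value domain rather than over unbounded sequences of values, and it makes the stabilization mechanism explicit; what it costs is the pacing/freezing machinery, which the paper's prefix-growth argument sidesteps entirely. Both arguments are sound; the extra bookkeeping you flag as ``the hard part'' is handled adequately by your induction on $\ell$ and the monotonicity of $m_\ell$.
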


\begin{proof}
To prove this result, it is sufficient to provide a protocol that implements \EIC~in an environment $\mathcal{E}$ knowing that there exists a protocol that implements \EC~in this environment. This transformation protocol $\mathcal{T}_{\EC\rightarrow\EIC}$ is stated in Algorithm \ref{algo:ECtoEIC}. Now, we are going to prove that $\mathcal{T}_{\EC\rightarrow\EIC}$ implements \EIC.

As any invocation of $\textit{proposeEIC}_\ell$ by a correct process $p_i$ leads to an invocation of $\textit{ProposeEC}_{\ell}$ by the same process, the \EC-Termination property ensures us that $p_i$ receives eventually a response (a sequence $decison$) from the \EC~primitive. Before this response, we have $decision_i[\ell]=\bot$. By the \EC-Validity property, we know that $decision[\ell]$ is a value proposed by one process (hence not equal to $\bot$). Then, the construction of the protocol ensures us that $DecideEIC(\ell,decision[\ell])$ is executed in a finite time, that proves that $\mathcal{T}_{\EC\rightarrow\EIC}$ satisfies the \EIC-Termination property.

Let $k$ be the index after which the \EC~primitive satisfies \EC-Agreement property. Let $\tau$ be the smallest time where all correct processes receive the response of $\textit{proposeEC}_k$.

After time $\tau$, we know that the sequences $decision$ returned to all process are identical. Then, the construction of the protocol ensures us that every sequence submitted to the \EC~primitive is prefixed by the last sequence returned by this primitive. Hence, the \EC-Agreement property ensures us that, after time $\tau$, $DecideEIC$ is executed only for the last value of the decision sequence and only when this sequence grows, that proves that $\mathcal{T}_{\EC\rightarrow\EIC}$ satisfies the \EIC-Integrity property.

Assume that two processes $p_i$ and $p_j$ return forever two different values for $\textit{proposeEIC}_\ell$ for some $\ell$. By the \EIC-Integrity property proved above, we know that $p_i$ and $p_j$ take at most one decision for $\textit{proposeEIC}_\ell$ after time $\tau$. That implies that $p_i$ and $p_j$ return different values at their last decision. Then, we can deduce that $decision_i[\ell]\neq decision_j[\ell]$ forever, that is contradictory with the definition of $\tau$ or with the \EC-Agreement property. This contradiction shows us that $\mathcal{T}_{\EC\rightarrow\EIC}$ satisfies the \EIC-Agreement property.

The fact that $\mathcal{T}_{\EC\rightarrow\EIC}$ satisfies the \EIC-Validity directly follows from the \EC-Validity.

In conclusion, $\mathcal{T}_{\EC\rightarrow\EIC}$ satisfies the \EIC~specification in an environment $\mathcal{E}$ provided that there exists a protocol that implements \EC~in this environment.
\end{proof}

\begin{algorithm}
\caption{$\mathcal{T}_{\EC\rightarrow\EIC}$: transformation from \EC~to \EIC~for process $p_i$}\label{algo:ECtoEIC}
\small
\begin{description}\itemsep0pt
\item[Internal variables:]~\\
$decision_i$: sequence of values decided by $p_i$ (initially $\epsilon$)

\item[Functions:]~\\
$DecideEIC(\ell,v)$ returns the value $v$ as a response to $\textit{proposeEIC}_{\ell}$

\item[On invocation of $\textit{proposeEIC}_{\ell}(v)$]~\\
$\textit{proposeEC}_{\ell}(decision_i.v)$

\item[On reception of $decision$ as response of $\textit{proposeEC}_{\ell}$]~\\
For $k$ from $0$ to $\ell$ do\\
$\left\lfloor\begin{array}{l}
\text{If } decision[k]\neq decision_i[k] \text{ then}\\
\left\lfloor\begin{array}{l}
DecideEIC(k,decision[k])
\end{array}\right.

\end{array}\right.$\\
$decision_i:=decision$

\end{description}
\normalsize
\end{algorithm}

\subsection{Transformation from \EIC~to \EC}

\begin{lemma}\label{lem:EICtoEC}
In every environment $\mathcal{E}$, there exists a transformation from \EIC~to \EC.
\end{lemma}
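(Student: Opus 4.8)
The plan is to exhibit a black-box transformation $\mathcal{T}_{\EIC\rightarrow\EC}$ that forwards every \EC~instance to the \EIC~instance of the same index and commits to the \emph{first} response it sees. Concretely, process $p_i$, upon invocation of $\textit{proposeEC}_{\ell}(v)$, records $\ell$ in a local counter and invokes $\textit{proposeEIC}_{\ell}(v)$; the first time the \EIC~module returns a value $w$ for instance $\ell$, $p_i$ returns $w$ as its response to $\textit{proposeEC}_{\ell}$ and ignores every later response of $\textit{proposeEIC}_{\ell}$. This is asynchronous in the required sense: \EC's inputs are passed verbatim to \EIC, and \EC's outputs are a sub-selection of \EIC's outputs. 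I would first check that this respects \EIC's invocation discipline: \EIC~requires $\textit{proposeEIC}_{\ell}$ to be invoked as soon as a first response to $\textit{proposeEIC}_{\ell-1}$ is produced, and by construction $p_i$ invokes $\textit{proposeEIC}_{\ell}$ exactly when the application invokes $\textit{proposeEC}_{\ell}$, which by \EC's discipline happens as soon as $p_i$ returns its response to $\textit{proposeEC}_{\ell-1}$, i.e., as soon as it received the first response of $\textit{proposeEIC}_{\ell-1}$ (a routine re-indexing reconciles the nominal starting indices of the two abstractions).

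The four \EC~properties then follow almost directly. \EC-Termination follows from \EIC-Termination: a correct process invoking $\textit{proposeEC}_{\ell}(v)$ invokes $\textit{proposeEIC}_{\ell}(v)$, eventually receives a first response, and forwards it. \EC-Integrity holds by construction, since $p_i$ returns only on the first response of each \EIC~instance. \EC-Validity follows from \EIC-Validity, because the value $p_i$ returns for $\textit{proposeEC}_{\ell}$ is a value \EIC~returned for $\textit{proposeEIC}_{\ell}$, hence a value proposed to $\textit{proposeEIC}_{\ell}$, which by construction is a value proposed to $\textit{proposeEC}_{\ell}$. For \EC-Agreement, let $k$ be the index witnessing the \EIC~guarantees in the run. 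For every $\ell\geq k$, \EIC-Integrity ensures each process responds to $\textit{proposeEIC}_{\ell}$ at most once, so the first response a process forwards is its only (hence its final) response to $\textit{proposeEIC}_{\ell}$; two processes forwarding distinct values would therefore forever disagree on their current response to $\textit{proposeEIC}_{\ell}$, contradicting \EIC-Agreement. Thus all processes return the same value to $\textit{proposeEC}_{\ell}$ for every $\ell\geq k$, and $k$ itself serves as the \EC~witness.

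The one step I would spell out carefully — and the only place the argument is not purely mechanical — is this last one: I must argue that in the regime $\ell\geq k$, where \EIC-Integrity guarantees at most one response per process, the \EIC-Agreement clause ``no two processes return infinitely different values'' genuinely forces those unique responses to coincide. The point is that once a process has produced its single response to $\textit{proposeEIC}_{\ell}$, that value is its current response to $\textit{proposeEIC}_{\ell}$ at all subsequent times (just as in the proof of Lemma~\ref{lem:ECtoEIC}), so two distinct unique responses would amount to two processes returning different values to $\textit{proposeEIC}_{\ell}$ forever — exactly what \EIC-Agreement forbids. I do not expect any deeper obstacle: the transformation is trivial because \EIC's ``integrity eventually'' regime already delivers, for large enough instance indices, a single agreed-upon decision per process, and forwarding only first responses merely discards the revocations that occur at small indices, where \EC~imposes no agreement requirement anyway.
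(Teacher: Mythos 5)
Your transformation is the same as the paper's: forward each $\textit{proposeEC}_{\ell}$ to $\textit{proposeEIC}_{\ell}$ and commit to the first \EIC~response while discarding later revocations, then derive the four \EC~properties from the corresponding \EIC~ones. The argument is correct and, if anything, more explicit than the paper's (which dismisses \EC-Agreement and \EC-Validity as following ``trivially''), since you spell out why, for $\ell\geq k$, \EIC-Integrity makes each first response permanent and \EIC-Agreement then forces coincidence.
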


\begin{proof}
To prove this result, it is sufficient to provide a protocol that implements \EC~in an environment $\mathcal{E}$ knowing that there exists a protocol that implements \EIC~in this environment. This transformation protocol $\mathcal{T}_{\EIC\rightarrow\EC}$ is stated in Algorithm \ref{algo:EICtoEC}. Now, we are going to prove that $\mathcal{T}_{\EIC\rightarrow\EC}$ implements \EC.

As any invocation of $\textit{proposeEC}_\ell$ by a correct process $p_i$ leads to an invocation of $\textit{proposeEIC}_{\ell}$ by the same process, the \EIC-Termination property ensures us that $p_i$ receives eventually at least one response from the \EIC~primitive. The use of the counter $count_i$ allows us to deduce that only the first response from the \EIC~primitive leads to a decision for $\textit{proposeEC}_\ell$ by $p_i$, that proves that $\mathcal{T}_{\EIC\rightarrow\EC}$ satisfies the \EC-Termination and the \EC-Integrity properties.

The construction of the protocol and the \EIC-Agreement and the \EIC-Validity properties trivially imply that $\mathcal{T}_{\EIC\rightarrow\EC}$ satisfies the \EC-Agreement and the \EC-Validity properties.

In conclusion, $\mathcal{T}_{\EIC\rightarrow\EC}$ satisfies the \EC~specification in an environment $\mathcal{E}$ provided that there exists a protocol that implements \EIC~in this environment.
\end{proof}

\begin{algorithm}
\caption{$\mathcal{T}_{\EIC\rightarrow\EC}$: transformation from \EIC~to \EC~for process $p_i$}\label{algo:EICtoEC}
\small
\begin{description}\itemsep0pt
\item[Internal variables:]~\\
$count_i$: integer (initially $0$) that stores the number of the last instance of consensus invoked by $p_i$

\item[Functions:]~\\
$DecideEC(\ell,v)$ returns the value $v$ as a response to $\textit{proposeEC}_{\ell}$

\item[On invocation of $\textit{proposeEC}_{\ell}(v)$]~\\
$count_i :=\ell$\\
$\textit{proposeEIC}_{\ell}(v)$

\item[On reception of $v$ as response of $\textit{proposeEIC}_{\ell}$]~\\
If $count_i=\ell$ then\\
$\left\lfloor\begin{array}{l}
DecideEC(\ell,v)
\end{array}\right.$

\end{description}
\normalsize
\end{algorithm}

\section{Background on the CHT proof}
\label{app:cht}

Let $\E$ be any environment, $\D$ be any {\fd} that can be used
	to solve consensus in $\E$, and $\A$ be any algorithm that solves consensus
	in $\E$ using $\D$.
We  determine a reduction algorithm $T_{\D\to\Omega}$ that, using failure
	detector $\D$ and algorithm $\A$,
	implements $\Omega$ in $\E$.
Recall that implementing $\Omega$ means outputting, at every process,
	the id of a process so that eventually,
	the id of the same correct process is output permanently at all correct
	processes.

\subsection{Overview of the reduction algorithm}
\label{subsec:cht:overview}

The basic idea underlying $T_{\D\to\Omega}$ is to have each process
	locally {\em simulate} the overall distributed system
	in which the processes execute several runs of $\A$
	that \emph{could have happened} in the current failure
	pattern and {\fd} history.
Every process then uses these runs to extract $\Omega$.

In the local simulations, every process $p$ feeds algorithm $\A$
	with a set of proposed values, one for each process of the
	system. 
Then all automata composing $\A$ are triggered
	locally by $p$ which emulates,
	for every simulated run of $\A$, the states of all processes as well as
	the emulated buffer of exchanged messages.

Crucial elements that are needed for the simulation
	are (1) the values from {\fd}s
	that would be output by $\D$ as well as
	(2) the order according to which the processes
	are taking steps.
For these elements, which we call the stimuli of
	algorithm $\A$, process $p$ periodically queries its {\fd} module
	and exchanges the {\fd} information with the other processes.

The reduction algorithm $T_{\D\to\Omega}$ consists of two tasks that are run
	in parallel at every process:
	the \emph{commmuncation task} and the \emph{computation task}.
In the communication task, every process maintains ever-growing stimuli
	of algorithm $\A$ by periodically querying its {\fd} module
	and sending the output to all other processes.
In the computation task, every process periodically feeds the stimuli
	to algorithm $\A$, simulates several runs of $\A$,
	and computes the current emulated output of $\Omega$.

\subsection{Building a DAG}
\label{subsec:cht:dag}

The communication task of algorithm $T_{\D\to\Omega}$ is presented in Figure~\ref{fig:CHT-comm}.
Executing this task, $p$ knows more and more of the processes'
	{\fd} outputs and temporal relations between
	them.
All this information is pieced together in a single data structure,
	a directed acyclic graph (DAG) $G_p$.
Informally, every vertex $[q,d,k]$ of $G_p$ is a {\fd} value
``seen'' by $q$ in its $k$-th query of its {\fd} module.
An edge  $([q,d,k],[q',d',k'])$ can be  interpreted as
	``$q$ saw {\fd} value $d$ (in its $k$-th query)
\emph{before} $q'$ saw {\fd} value $d'$ (in its $k'$-th query)''.

\begin{figure}[tbp]
\hrule \vspace{2mm}
{\small
\begin{tabbing}
 bbb\=bb\=bb\=bb\=bb\=bb\=bb\=bb \=  \kill
\> $G_p \get $ empty graph\\
\> $k_p \get 0$\\
\> \textbf{while} true \textbf{do}\\
\>\>	receive message $m$\\
\smallskip
\>\> 	$d_p \get $ query {\fd} $\D$\\
\smallskip
\>\>	$k_p\get k_p+1$\\
\>\> \textbf{if} $m$ is of the form $(q,G_q,p)$ \textbf{then} $G_p\get G_p\cup G_q$\\
\>\>	add $[p,d_p,k_p]$ and edges from all vertices of $G_p$ to $[p,d_p,k_p]$ to $G_p$\\
\>\>	send $(p,G_p,q)$ to all $q\in\Pi$
\end{tabbing}
\hrule
}
\caption{Building a DAG: process $p$}
\label{fig:CHT-comm}
\end{figure}

DAG $G_p$ has some special properties which follow from its construction.
Let $F$ be the current failure pattern in $\E$ and $H$
	be the current {\fd} history in $\D(F)$.
Then: 
\begin{enumerate}
\item[(1)]  The vertices of $G_p$ are of the form $[q,d,k]$ where $q\in\Pi$,
$d\in \R_{\D}$ and $k\in \Nat$.
There is a mapping $\tau: \mbox{ vertices of }G_p \mapsto \Time$,
associating a time with every vertex of $G_p$,
such that:
\begin{enumerate}
\item[(a)] For any vertex $v=[q,d,k]$, $q \notin F(\tau(v))$ and $d=H(q,\tau(v))$.
That is, $d$ is the value output by $q$'s {\fd} module at time $\tau(v)$.

\item[(b)] For any edge $(v,v')$ in $G_p$, $\tau(v)<\tau(v')$.
That is, any edge in $G_p$ reflects the temporal order
	in which the {\fd} values are output.
\end{enumerate}
\item[(2)] If $v'=[q,d,k]$ and $v''=[q,d',k']$ are vertices of $G_p$, and $k<k'$, then
$(v,v')$ is an edge of $G_p$.
\item[(3)] $G_p$ is transitively closed: if $(v,v')$ and $(v',v'')$ are edges of $G_p$,
	then $(v,v'')$ is also an edge of $G_p$.
\item[(4)] For all correct processes $p$ and $q$ and all times $t$,
	there is a time $t'\geq t$, a $d\in\R_{\D}$ and a
	$k\in\Nat$ such that for every  vertex $v$ of $G_p(t)$,
	$(v,[q,d,k])$ is an edge of $G_p(t')$.\footnote{
	For any variable $x$ and time $t$, $x(t)$ denotes
	the value of $x$ at time $t$.}
\end{enumerate}

Note that properties (1)--(4) imply that, for every correct process $p$, $t\in\Time$ and $k\in\Nat$,
	there is a time $t'$ such that $G_p(t')$ contains a path
	$g=[q_1,d_1,k_1]\to[q_2,d_2,k_2]\to\ldots$, such that
	(a) every correct process appears at least $k$ times in $g$,
	and (b) for any path $g'$ in $G_p(t)$, $g'\cdot g$ is also a path in $G_p(t')$.

\subsection{Simulation trees}
\label{subsec:cht:simulation}

Now DAG $G_p$ can be used to simulate runs of $\A$.
Any path $g=[q_{1},d_1,k_1],$ $[q_{2},d_2,k_2],$ $\ldots,[q_{s},d_s,k_s]$
	through $G_p$ gives the order in which
	processes $q_{1}$, $q_{2},\ldots,$ $q_{s}$ ``see'', respectively,
	{\fd} values $d_1$, $d_1,d_2,\ldots,$ $d_s$.
That is, $g$ contains an activation schedule and {\fd}
	outputs for the processes to execute steps of
	$\A$'s instances.
Let $I$ be any initial configuration of $\A$.
Consider a schedule $S$ that is applicable to $I$
	and \emph{compatible with $g$},
	i.e., $|S|=s$ and $\forall k\in\{1,2,\ldots,s\}$, $S[k]=(q_k,m_k,d_k)$, where
	$m_k$ is a message addressed to $q_k$ (or the null message $\lambda$).

All schedules that are applicable to $I$ and compatible with paths in $G_p$
	can be represented as a tree $\Upsilon_p^I$, called the
	\emph{simulation tree induced by $G_p$ and $I$}.
The set of vertices of $\Upsilon_p^I$ is the set of all schedules $S$ that are applicable to $I$
	and compatible with paths in $G_p$.
The root of $\Upsilon_p^I$ is the empty schedule $S_{\bot}$.
There is an edge from $S$ to $S'$ if and only if $S'=S\cdot e$ for a step $e$;
	the edge is labeled $e$.
Thus, every vertex $S$ of $\Upsilon_p^I$ is associated with a sequence
	of steps $e_1\,e_2\,\ldots\,e_s$ consisting of labels of the edges
	on the path from $S_{\bot}$ to $S$.
In addition, every descendant of $S$ in $\Upsilon_p^I$ corresponds to 	
	an extension of $e_1\,e_2\,\ldots\,e_s$.

The construction of $\Upsilon_p^I$ implies that, for any vertex $S$ of $\Upsilon_p^I$,
	there exists a partial run
	$\tpl{F,H,I,S,T}$ of $\A$ where $F$ is the current failure pattern and
	$H\in\D(F)$ is the current {\fd} history.
Thus, if in $S$, correct processes appear sufficiently often
	and receive sufficiently many messages sent to them,
	then every correct (in $F$) process decides in $S(I)$.

\begin{figure}[htbp]
  \centering
  \includegraphics[scale=.9]{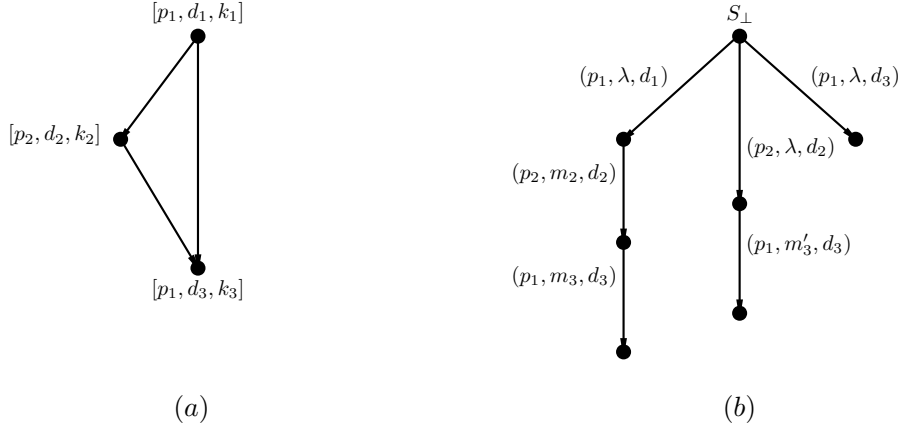}
  \caption{A DAG and a tree}
  \label{fig:dag-tree}
\end{figure}

In the example depicted in Figure~\ref{fig:dag-tree},
	a DAG (a) induces a simulation tree a portion of which is shown in (b).
There are three non-trivial paths in the DAG: $[p_1,d_1,k_1]\to[p_2,d_2,k_2]\to[p_1,d_3,k_3]$,
	$[p_2,d_2,k_2]\to[p_1,d_3,k_3]$, $[p_2,d_2,k_2]\to[p_1,d_3,k_3]$
	and $[p_1,d_1,k_1]\to[p_1,d_3,k_3]$.
Every path through the DAG and an initial configuration $I$
	induce at least one schedule in the simulation tree.
Hence, the simulation tree has at least three leaves:
	$(p_1,\lambda,d_1)$ $(p_2,m_2,d_2)$ $(p_1,m_3,d_3)$, $(p_2,\lambda,d_2)$ $(p_1,m_3',d_3)$, and
	$(p_1,\lambda,d_3)$.
Recall that $\lambda$ is the empty message: since the message buffer is empty in $I$,
	no non-empty message can be received in the first step of any schedule.

\subsection{Tags and valences}
\label{subsec:cht:valence}

Let $I^i$, $i\in\{0,1,\ldots,n\}$ denote the initial configuration of $\A$
	in which processes $p_1,\ldots,p_{i}$ propose $1$ and the rest (processes
	$p_{i+1},\ldots,p_n$) propose $0$.
In the computation task of the reduction algorithm, every process
	$p$ maintains an ever-growing \emph{simulation forest}
	$\Upsilon_p= \{\Upsilon_p^{0},\Upsilon_p^{1}, \ldots, \Upsilon_p^{n}\}$
	where $\Upsilon_p^{i}$ ($0\leq i\leq n$) denotes
	the simulation trees induced by $G_p$ and initial configurations
	$I^i$.

For every vertex of the simulation forest, $p$
	assigns a set of \emph{tags}.
Vertex $S$ of tree $\Upsilon_p^{i}$ is assigned a tag $v$ if and only if
	$S$ has a descendant $S'$ in $\Upsilon_p^i$ such that
	$p$ decides $v$ in $S'(I^i)$.
We call the set tags the \emph{valence} of the vertex.
By definition, if $S$ has a descendant with a tag $v$, then $S$ has tag $v$.
Validity of consensus ensures that the set of tags is a subset of $\{0,1\}$.

Of course, at a given time, some vertices of the simulation forest $\Upsilon_p$ might not have
	any tags because the simulation stimuli are not sufficiently long yet.
But this is just a matter of time: if $p$ is correct, then
	every vertex of $p$'s simulation forest will eventually have an extension	
	in which correct processes appear sufficiently often for $p$
	to take a decision.
	
A vertex $S$ of $\Upsilon_p^i$ is \emph{$0$-valent} if it has exactly one tag $\{0\}$
	(only $0$ can be decided in $S$'s extensions in $\Upsilon_p^i$).
A $1$-valent vertex is analogously defined.
If a vertex $S$ has both tags $0$ and $1$ (both $0$ and $1$ can be decided
	in $S$'s extensions), then we say that $S$ is \emph{bivalent}.\footnote{
		The notion of valence was first defined in~\cite{FLP85} as
		the set of values than are decided in
		\emph{all} extensions of a given execution.
		Here we define the valence as only a subset of these values,
		defined by the simulation tree.}

It immediately follows from Validity of consensus that
	the root of $\Upsilon_p^{0}$ can at most be $0$-valent,
	and the root of $\Upsilon_p^{n}$ can at most be $1$-valent
	(the roots of $\Upsilon_p^{0}$ and $\Upsilon_p^{n}$ cannot
	be bivalent).

\subsection{Stabilization}
\label{subsec:cht:stabilization}

Note that the simulation trees can only grow with time.
As a result, once a vertex of the simulation forest $\Upsilon_p$ gets a tag $v$, it
	cannot lose it later.
Thus, eventually every vertex of $\Upsilon_p$ stabilizes being
	$0$-valent, $1$-valent, or bivalent.
Since correct processes keep continuously exchanging the {\fd} samples
	and updating their simulation forests,
	every simulation tree computed by a correct process at any given time
	will eventually be a subtree of the simulation forest of every correct process.

Formally, let $p$ be any correct process, $t$ be any time, $i$
	be any index in $\{0,1,\ldots,n\}$, and $S$ be any vertex of
	$\Upsilon_p^{i}(t)$. 
Then:
\begin{enumerate}

\item[(i)] 
	There exists a non-empty $V\subseteq \{0,1\}$ such that
	there is a time after which the valence of $S$ is $V$.
	(We say that the valence of $S$ \emph{stabilizes} on $V$ at $p$.)

\item[(ii)] If the valence of $S$ stabilizes on $V$ at $p$, then
	for every correct process $q$, there is a time after
	which $S$ is a vertex of $\Upsilon_q^{i}$ and
	the valence of $S$ stabilizes on $V$ at $q$.

\end{enumerate}

Hence, the correct processes eventually agree on the same tagged simulation
	subtrees.
In discussing the stabilized tagged simulation forest,
	it is thus convenient to consider the \emph{limit}
	infinite DAG $G$ and the \emph{limit} infinite simulation
	forest $\Upsilon=\{\Upsilon^0,\Upsilon^1,\ldots,\Upsilon^n\}$
	such that for all $i\in\{0,1,\ldots,n\}$ and all correct processes $p$,
	$\cup_{t\in\Time} G_p(t) = G$ and
	$\cup_{t\in\Time} \Upsilon_p^i(t) = \Upsilon^i$.

\subsection{Critical index}
\label{subsec:cht:critical}

Let $p$ be any correct process.
We say that index $i\in\{1,2,\ldots,n\}$ is \emph{critical}
	if \emph{either} the root of $\Upsilon^{i}$ is bivalent \emph{or}
	the root of $\Upsilon^{i-1}$ is $0$-valent and the root of $\Upsilon^{i}$ is $1$-valent.
In the first case, we say that $i$ is \emph{bivalent critical}.
In the second case, we say that
	$i$ is \emph{univalent critical}.

\begin{lemma}
\label{lemma:cht-critical}
There is at least one critical index in $\{1,2,\ldots,n\}$.
\end{lemma}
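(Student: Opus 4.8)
The plan is to argue by contradiction, exploiting that the stabilized valences of the roots of $\Upsilon^0,\Upsilon^1,\ldots,\Upsilon^n$ form a sequence that is pinned to $\{0\}$ at one end and to $\{1\}$ at the other, so somewhere along it either a bivalent root must appear or a univalent root must ``flip'' from $0$ to $1$ --- and either of these events is exactly what the definition of a critical index names.

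First I would fix a correct process $p$ and pass to the limit simulation forest $\Upsilon=\{\Upsilon^0,\ldots,\Upsilon^n\}$, so that by the stabilization properties~(i)--(ii) recalled above every root has a well-defined, non-empty stabilized valence $V_i\subseteq\{0,1\}$. By Validity of consensus, in $I^0$ every process proposes $0$, hence only $0$ can be decided in any extension of the root of $\Upsilon^0$, so $V_0=\{0\}$; symmetrically, in $I^n$ every process proposes $1$, so $V_n=\{1\}$.

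Then I would suppose, for contradiction, that no index in $\{1,\ldots,n\}$ is critical. In particular, no index is bivalent critical, so for every $i\in\{1,\ldots,n\}$ the root of $\Upsilon^i$ is univalent; writing $b_i\in\{0,1\}$ for the unique element of $V_i$, this defines a function $i\mapsto b_i$ on $\{0,1,\ldots,n\}$ (the root of $\Upsilon^0$ is $0$-valent, so $b_0$ is defined too), with $b_0=0$ and $b_n=1$. Let $j$ be the least index with $b_j=1$; since $b_0=0$ we have $j\ge 1$, and by minimality $b_{j-1}=0$. But then the root of $\Upsilon^{j-1}$ is $0$-valent and the root of $\Upsilon^{j}$ is $1$-valent, i.e.\ $j$ is univalent critical --- contradicting the assumption. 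Hence a critical index exists.

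I do not expect a genuine obstacle here; the only points requiring care are that the valences must be taken in the stabilized limit forest, so that each $V_i$ is a genuine non-empty subset of $\{0,1\}$ (otherwise ``univalent'' would not be exhaustive among the non-bivalent roots), and that the two endpoint values $V_0=\{0\}$, $V_n=\{1\}$ are forced by Validity rather than merely by what happens to be simulated so far. Both facts are furnished by the material recalled above.
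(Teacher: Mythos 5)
Your proof is correct and follows essentially the same route as the paper's: both anchor $V_0=\{0\}$ and $V_n=\{1\}$ via Validity and then apply a discrete intermediate-value argument to the sequence of stabilized root valences, locating the first index where the valence ceases to be $\{0\}$. The only difference is presentational --- you phrase it as a contradiction while the paper argues directly that such a first index must be either univalent or bivalent critical.
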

\begin{proof}
Indeed, by the Validity property of consensus, the root of $\Upsilon^{0}$ is $0$-valent,
	and the root of $\Upsilon^{1}$ is $1$-valent.
Thus, there must be an index $i\in\{1,2,\ldots,n\}$ such that
	the root of $\Upsilon^{i-1}$ is $0$-valent,
	and $\Upsilon^{i}$ is either
	$1$-valent or bivalent.
\end{proof}

\noindent
Since tagged simulation forests computed at the correct processes tend
	to the same infinite tagged simulation forest,
	eventually, all correct processes compute the same \emph{smallest} critical
	index $i$ of the same type (univalent or bivalent).
Now we have two cases to consider for the smallest critical index: (1) $i$ is univalent critical, or
	(2) $i$ is bivalent critical.

\ignore{
Before extracting a correct process from the simulated forest,
	we observe the following:
\begin{lemma}
\label{lemma:cht-similar}
Let $S_0$ and $S_1$ be two vertices of $\Upsilon^i$ and $p$ be a process
	$p$ such that for all $q \in \Pi-\{p\}$,
	the states of $q$ 
	in $S_0(I^i)$ and $S_1(I^i)$ are identical.
If $S_0$ is $0$-valent and $S_1$ is $1$-valent, then $p$ is correct.
\end{lemma}
\begin{proof}
By contradiction, assume that $p_i$ is faulty.
Let $S_0$ be induced by a path $g_1$ in $G$, and let $S_1$ be
	induced by a path $g_2$ in $G$.
Then $G$ contains infinite paths $g_1\cdot g$ and $g_2\cdot g$ such that in $g$,
	$p_i$ does not participate and every correct process participates infinitely often.
Then $\Upsilon^i$ contains a schedule $S$
	that is applicable to $S_0(I^i)$ (induced by a prefix of $g$ and $I^i$) in which
	$p_i$ does not take steps and every correct process $p$ decides.
Since $S_0$ is $0$-valent, $p$ decides $0$ in $S(I^i)$.
But $p_i$ is the only process that has different states in
	$I^{i-1}$, and $I^i$ and $p_i$ does not take part in $S$.
Thus, $S$ (induced by a prefix of $g$ and $I^{i-1}$) is also a vertex of $\Upsilon^{i-1}$,
	and $p$ decides $1$ in $S(I^{i-1})$.
But the root of $\Upsilon^{i-1}$ is $0$-valent
	--- a contradiction.
\end{proof}
}

\subsubsection*{(1) Handling univalent critical index}
\label{subsec:cht:univalent}

\begin{lemma}
\label{lemma:cht-univalent}
If $i$ is univalent critical, then $p_i$ is correct.
\end{lemma}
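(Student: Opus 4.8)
The plan is a proof by contradiction that mirrors the classical CHT argument, exploiting the single fact that $I^{i-1}$ and $I^{i}$ differ only in the initial state of $p_i$ (namely $p_i$ proposes $0$ in $I^{i-1}$ and $1$ in $I^{i}$, while all other processes propose the same value in both). So I would assume that $p_i$ is faulty and aim to contradict the hypothesis that the root of $\Upsilon^{i-1}$ is $0$-valent. (Throughout I would assume $\correct(F)\neq\emptyset$; otherwise there is nothing for the reduction to emulate and the statement is not needed.)

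First I would locate a convenient path in the limit DAG $G$. Since $p_i$ crashes it queries $\D$ only finitely many times, so $G$ contains only finitely many vertices of the form $[p_i,\cdot,\cdot]$. On the other hand, iterating property~(4) of $G$ over the correct processes yields (in the limit) an infinite path $g$ through $G$ in which every correct process appears infinitely often and whose vertices are all of the form $[q,\cdot,\cdot]$ with $q$ correct; in particular $g$ never mentions $p_i$. I would then choose a finite prefix $g'$ of $g$ long enough that, delivering along the way to every correct process all messages addressed to it, the schedule $S$ compatible with $g'$ and applicable to $I^{i}$ drives every correct process to a decision in $S(I^{i})$ — this is possible because $g'$ can be taken to make each correct process take arbitrarily many steps and $\A$ solves consensus. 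By construction $S$ is a vertex of $\Upsilon^{i}$, hence a descendant of its root; since that root is $1$-valent, no descendant can carry the tag $0$, so $S$'s tag set is exactly $\{1\}$ and every correct process decides $1$ in $S(I^{i})$.

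Next I would transplant $S$ to $I^{i-1}$. Because $p_i$ takes no step in $S$ and is the only process whose initial state differs between the two configurations, the very same sequence of steps $S$ is applicable to $I^{i-1}$, is still compatible with the ($p_i$-free) path $g'$, and is therefore a vertex of $\Upsilon^{i-1}$; moreover every process other than $p_i$ reaches the same local state in $S(I^{i-1})$ as in $S(I^{i})$. Hence some correct process decides $1$ in $S(I^{i-1})$, so $S$ carries the tag $1$ in $\Upsilon^{i-1}$, contradicting that the root of $\Upsilon^{i-1}$ is $0$-valent. Therefore $p_i$ is correct.

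I expect the only delicate point to be the transplant step: one must check that $S$, read as a sequence of steps $(q_k,m_k,d_k)$, is genuinely applicable to $I^{i-1}$ — the received messages $m_k$ are exactly those produced earlier in $S$, and since $p_i$ is silent in $S$ these coincide with the messages produced when $S$ runs from $I^{i}$ — and that $S$ remains a vertex of the correct tree $\Upsilon^{i-1}$, which is immediate since compatibility is with the unchanged, $p_i$-free path $g'$. Everything else is routine bookkeeping already present in the CHT development recalled above.
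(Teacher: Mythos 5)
Your proof is correct and follows essentially the same route as the paper's: assume $p_i$ faulty, extract from the limit DAG a $p_i$-free schedule $S$ in which a correct process decides (necessarily $1$, by $1$-valence of the root of $\Upsilon^{i}$), and transplant $S$ to $I^{i-1}$ --- the only configuration difference being $p_i$'s silent initial state --- to contradict the $0$-valence of the root of $\Upsilon^{i-1}$. The extra care you take over applicability of the transplanted schedule and compatibility with the $p_i$-free path is exactly the bookkeeping the paper leaves implicit.
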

\begin{proof}
By contradiction, assume that $p_i$ is faulty.
Then $G$ contains an infinite path $g$ in which $p_i$ does not participate
	and every correct process participates infinitely often.
Then $\Upsilon^i$ contains a vertex $S$ 
	such that $p_i$ does not take steps in $S$ and some correct process $p$ decides in $S(I^i)$.
Since $i$ is $1$-valent, $p$ decides $1$ in $S(I^i)$.
But $p_i$ is the only process that has different states in
	$I^{i-1}$ and $I^i$, and $p_i$ does not take part in $S$.
Thus, $S$ 
	is also a vertex of $\Upsilon^{i-1}$
	and $p$ decides $1$ in $S(I^{i-1})$.
But the root of $\Upsilon^{i-1}$ is $0$-valent
	--- a contradiction.
\end{proof}

\subsubsection*{(2) Handling bivalent critical index}
\label{subsec:cht:bivalent}

Assume now that the root of $\Upsilon^i$ is \emph{bivalent}.
Below we show that $\Upsilon^i$ then contains a \emph{decision gadget}, i.e.,
	a finite subtree which is either a \emph{fork} or a \emph{hook}
	(Figure~\ref{fig:CHT-fork-hook}).

\begin{figure}[htbp]
  \centering
  \includegraphics[scale=.85]{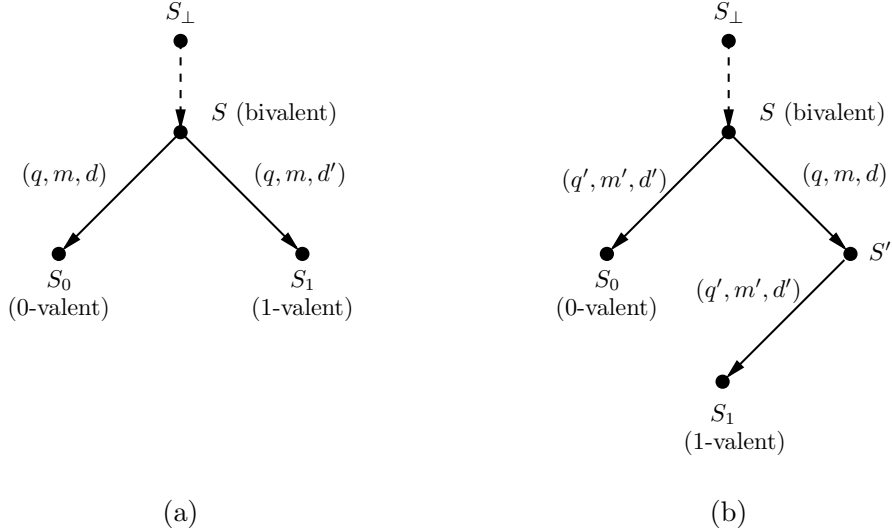}
  \caption{A fork and a hook}
  \label{fig:CHT-fork-hook}
\end{figure}

A fork (case (a) in Figure~\ref{fig:CHT-fork-hook}) consists of a bivalent
	vertex $S$ from which two \emph{different} steps by the
	\emph{same} process $q$, consuming the same message $m$,
	are possible which lead, on the one hand, to
	a $0$-valent vertex $S_0$ and, on the other hand, to a $1$-valent vertex
	$S_1$.

A hook (case (b) in Figure~\ref{fig:CHT-fork-hook}) consists of a
	bivalent vertex $S$, a vertex $S'$ which is reached by
	executing a step of some process $q$, and two vertices
	$S_0$ and $S_1$ reached by applying \emph{the same} step of process $q'$
	to, respectively, $S$ and $S'$.
Additionally, $S_0$ must be $0$-valent and $S_1$ must be $1$-valent (or vice
	versa; the order does not matter here).

In both cases, we say that $q$ is the \emph{deciding process}, and
	$S$ is the \emph{pivot} of the decision gadget.

\begin{lemma}
\label{lemma:cht-deciding}
The deciding process of a decision gadget is correct.
\end{lemma}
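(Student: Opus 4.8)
The plan is to argue by contradiction: suppose the deciding process $q$ of the gadget is faulty in the current failure pattern $F$, and derive a violation of the univalence of one of the gadget's univalent vertices.

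First I would record what being faulty buys us about the limit DAG $G$. Since a faulty $q$ takes only finitely many steps of the reduction algorithm, $G$ has only finitely many $q$-vertices; together with property~(4) and transitive closure of $G$, this lets me find, from any \emph{finite} set of vertices of $G$, a single infinite $q$-free path $g$ in which every correct process appears infinitely often and whose first vertex is a common $G$-successor of all the given vertices. Along $g$ I would build one fixed schedule $\sigma$ that, to correct processes, delivers every message in transit at the base configuration as well as every message produced along $\sigma$ (all of these sent by correct processes), but never delivers a message addressed to $q$ and never delivers a message sent by $q$ in the gadget's ``split'' step(s). The last restriction is the crucial one: it makes $\sigma$ simultaneously applicable to, and indistinguishable to the reducing process $p$ in, two sibling configurations that differ only in $q$'s local state and in $q$'s post-split outgoing messages.

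For a \emph{fork} I would take the base configuration to be the pivot $C=S(I^i)$ itself. There are no post-split $q$-messages in transit in $C$, so $S\cdot\sigma$ is an admissible run — correct processes take infinitely many steps and every message to a correct process is delivered (messages to the crashed $q$ carry no obligation). By the Termination property of consensus, $p$ decides some $v\in\{0,1\}$ at a finite prefix $\sigma'$ of $\sigma$, and $S\cdot\sigma'$ is a vertex of $\Upsilon^i$. Now I prepend the fork step $e$ of $q$ that leads to the $(1-v)$-valent leaf; having chosen $g$ as a common continuation of the endpoint of $S$'s path and of both fork vertices, $S\cdot e\cdot\sigma'$ is again a vertex of $\Upsilon^i$, and since $e$ only changes $q$'s state and $q$'s outgoing messages and consumes a message destined to $q$ (which $\sigma'$ never delivers), $p$'s execution, and hence its decision $v$, is unchanged — contradicting that this leaf is $(1-v)$-valent. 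For a \emph{hook} I would instead take the base configuration to be the univalent leaf $S_0=S\cdot f$, which does \emph{not} contain the $q$-step $e$; in $S_0(I^i)$ the only in-transit $q$-messages are again pre-pivot ones, so $S_0\cdot\sigma$ is admissible and $p$ decides the valency $v$ of $S_0$ within a finite prefix $\sigma'$. The very same $\sigma'$ is applicable to $S_1(I^i)=f(e(S(I^i)))$ (which differs only by $q$'s extra step $e$: $q$'s state, the messages $q$ sent in $e$ — which $\sigma'$ skips — and the message $e$ consumed, destined to $q$ — which $\sigma'$ never delivers) and is indistinguishable to $p$, so $p$ decides $v$ in the $\Upsilon^i$-vertex $S_1\cdot\sigma'$, contradicting that $S_1$ is $(1-v)$-valent.

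The hard part is purely bookkeeping: I must make sure that (i) the constructed schedules really are compatible with paths in $G$, which is exactly where property~(4) and transitive closure are needed to obtain one continuation $g$ valid from several ``anchor'' vertices at once; and (ii) the \emph{base} run is genuinely admissible — so that Termination forces $p$ to decide — even though no message $q$ sends after the split is ever delivered. Point~(ii) is the reason the base configuration must be the pivot in the fork case and the $e$-free leaf $S_0$ in the hook case: only there is no such forever-undelivered $q$-message yet in transit, so admissibility of that particular run is not broken. The remaining points — indistinguishability of $p$'s view, the fact that prepending a single $q$-step keeps a schedule inside $\Upsilon^i$, and consistency with $F$ (all of $q$'s steps in these runs precede its crash time, since they correspond to $q$-vertices of $G$) — are routine.
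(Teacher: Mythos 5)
Your proposal is correct and follows essentially the same route as the paper's own proof: assume the deciding process $q$ is faulty, use the finiteness of $q$'s vertices in the limit DAG $G$ together with property~(4) and transitive closure to obtain a $q$-free fair continuation that is simultaneously compatible with the relevant anchor vertices, force a decision by a correct process via Termination, and transfer the finite deciding schedule by indistinguishability to the oppositely-valent sibling ($S$ versus $S_1$ for the fork, $S_0$ versus $S_1$ for the hook), contradicting univalence. Your additional bookkeeping about where to base the extension so that the run is admissible, and about never delivering $q$'s post-split messages, only makes explicit what the paper leaves implicit.
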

\begin{proof}
Consider any decision gadget $\gamma$ defined by a pivot $S$,
	vertices $S_0$ and $S_1$ of opposite
	valence and a deciding process $q$.
By contradiction, assume that $q$ is faulty.
Let $g$, $g_0$ and $g_1$ be the simulation stimuli of, respectively,
	$S$, $S_0$ and $S_1$.
Then $G$ contains an infinite path $\tilde g$ such that
	(a) $g\cdot \tilde g$, $g_0\cdot \tilde g$, $g_1\cdot \tilde g$
	are paths in $G$, and
	(b) $q$ does not appear
	and the correct processes appear infinitely often in $g$.

Let $\gamma$ be a fork (case (a) in Figure~\ref{fig:CHT-fork-hook}).
Then there is a finite schedule $\tilde S$ compatible with a prefix of $\tilde g$ and
	applicable to $S(I^i)$ such that some correct process
	$p$ decides in $S\cdot \tilde S(I^i)$;
	without loss of generality, assume that $p$ decides $0$.
Since $q$ is the only process that can distinguish $S(I^i)$ and $S_1(I^i)$,
	and $q$ does not appear in $\tilde S$,
	$\tilde S$ is also applicable to $S_1(I^i)$.
Since $g_1\cdot \tilde g$ is a path of $G$ and $\tilde S$ is compatible with a prefix of $\tilde g$,
	it follows that  $S_1\cdot \tilde S$ is a vertex of $\Upsilon^i$.
Hence, $p$ also decides $0$ in $S_1\cdot \tilde S(I^i)$.
But $S_1$ is $1$-valent --- a contradiction.

Let $\gamma$ be a hook (case (b) in Figure~\ref{fig:CHT-fork-hook}).
Then there is a finite schedule $\tilde S$ compatible with a prefix of $g$ and
	applicable to $S_0(I^i)$ such that some correct process
	$p$ decides in $S_0\cdot \tilde S(I^i)$.
Without loss of generality, assume that $S_0$ is $0$-valent, and hence $p$ decides $0$
	in  $S_0\cdot \tilde S(I^i)$.
Since $q$ is the only process that can distinguish $S_0(I^i)$ and $S_1(I^i)$,
	and $q$ does not appear in $\tilde S$,
	$\tilde S$ is also applicable to $S_1(I^i)$.
Since $g_1\cdot \tilde g$ is a path of $G$ and $\tilde S$ is compatible with a prefix of $\tilde g$,
	it follows that  $S_1\cdot \tilde S$ is a vertex of $\Upsilon^i$.
Hence, $p$ also decides $0$ in $S_1\cdot \tilde S(I^i)$
But $S_1$ is $1$-valent --- a contradiction.
\end{proof}

\noindent
Now we need to show that any bivalent simulation tree $\Upsilon^i$ contains at
	least one decision gadget $\gamma$.

\begin{lemma}
\label{lemma:cht-bivalent}
If $i$ is bivalent critical, then $\Upsilon^i$ contains a decision gadget.
\end{lemma}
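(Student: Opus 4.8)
The plan is to adapt the classical FLP bivalency argument to the simulation tree $\Upsilon^i$. Assume $i$ is bivalent critical, so the root of $\Upsilon^i$ is bivalent, and suppose, for contradiction, that $\Upsilon^i$ contains \emph{no} decision gadget --- neither a fork nor a hook. I will build, starting from the root, an infinite path of \emph{bivalent} vertices $\sigma_0 \subset \sigma_1 \subset \cdots$ whose induced schedule is admissible: every correct process is scheduled infinitely often and every message sent to a correct process is eventually received --- concretely, processing the correct processes round-robin and, at each turn of $q$, letting $q$ receive the oldest message currently pending to it (the null message $\lambda$ if none is pending), which drains every queue in the limit. By the construction of $\Upsilon^i$, this path induces a genuine admissible run $\tpl{F,H,I^i,S,T}$ of $\A$ using $\D$. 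Fix a correct process $p$; by the Termination property of consensus $p$ decides in this run, say value $v$, and let $\sigma_j$ be the prefix whose last step is $p$'s decision. By Integrity $p$ never decides again, and by Agreement no process ever decides a value different from $v$ in any extension of $\sigma_j$; hence every descendant of $\sigma_j$ in which some process decides, decides $v$, so the valence of $\sigma_j$ is $\{v\}$. Thus $\sigma_j$ is univalent, contradicting that every vertex on the path is bivalent. This proves the lemma, modulo the construction of the path.

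The construction reduces to one \emph{progress step}: given a bivalent vertex $\sigma$, a correct process $q$, and a message $m$ pending to $q$ (or $m=\lambda$), and assuming no decision gadget exists, reach a bivalent descendant $\sigma'$ of $\sigma$ by a path of bivalent vertices along which $q$ receives $m$. Consider the subtree $\mathcal{B}$ of $\Upsilon^i$ rooted at $\sigma$ made of the bivalent vertices reachable from $\sigma$ through bivalent vertices at which $q$ has not yet received $m$. If some $\tau\in\mathcal{B}$ has a step ``$q$ receives $m$'' leading to a bivalent vertex, that vertex is the wanted $\sigma'$. Otherwise every ``$q$-receives-$m$'' one-step extension of every $\tau\in\mathcal{B}$ is univalent, and two such extensions of the same $\tau$ cannot carry opposite valences --- that would be a fork with pivot $\tau$ and deciding process $q$ --- so each $\tau\in\mathcal{B}$ acquires a well-defined colour in $\{0,1\}$, that of its ``$q$-receives-$m$'' children. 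Since $\sigma$ is bivalent it has a descendant deciding the colour opposite to $\sigma$'s; following a path from $\sigma$ toward such a descendant and noting that this path cannot leave $\mathcal{B}$ by a ``$q$-receives-$m$'' step (which would itself produce a bivalent target, a contradiction), one locates inside the connected subtree $\mathcal{B}$ an edge $\tau\to\tau\cdot g$ between vertices of opposite colour; applying a common ``$q$-receives-$m$'' step to both $\tau$ and $\tau\cdot g$ then yields a hook with pivot $\tau$ and deciding process $q$ --- another contradiction. Hence the progress step always succeeds, and iterating it in round-robin order over correct processes and their pending messages builds the required infinite admissible path of bivalent vertices.

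The main obstacle is carrying out this valency-and-commutation reasoning \emph{inside} the DAG-constrained tree $\Upsilon^i$ rather than in an unconstrained state-transition system. A step $(q,m,d)$ is legal at a vertex only if that vertex's path in $G$ can be prolonged by a matching vertex $[q,d,\cdot]$, so one cannot freely reorder steps or ``carry a step along'' an independent one; instead one must invoke the structure of $G$ --- in particular the consequence of its properties (1)--(4) that from any current path there is an extension visiting every correct process arbitrarily many times while prolonging every already-existing path --- to choose failure-detector values that keep the repeated or reordered steps legal in $\Upsilon^i$ and, in the hook case, to guarantee that the ``$q$-receives-$m$'' step used at $\tau$ is still available at $\tau\cdot g$. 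Verifying this bookkeeping, and checking that the configurations forced by the no-gadget assumption are \emph{exactly} forks and hooks --- with the two-different-processes case excluded by a commutation contradiction rather than declared a new gadget --- is the delicate part; the rest is the standard FLP packaging described above.
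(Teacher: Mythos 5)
Your proposal is, in substance, the paper's own proof reorganized as a single contradiction: the paper runs a search for an infinite fair path of bivalent vertices, shows non-termination would contradict \C-Termination and \C-Agreement, and extracts a fork or hook from the configuration at which the search gets stuck; you assume no gadget exists, prove the search can never get stuck (your ``progress step''), and derive the same contradiction. The key combinatorial content --- the well-definedness of a per-vertex colour because two opposite-valent $(q,m,\cdot)$-children would form a fork, the valence switch along a path yielding a hook, and the reliance on the transitive closure of $G$ to keep the $(q,m,\cdot)$-step applicable at every vertex along that path --- matches Lemma~\ref{lemma:cht-bivalent}'s proof.

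One spot in your progress step is imprecise. You define the colour only on $\mathcal{B}$, the \emph{bivalent} vertices reachable from $\sigma$ through bivalent vertices, and claim the path toward a descendant deciding the opposite value yields an opposite-coloured edge inside $\mathcal{B}$, asserting that this path ``cannot leave $\mathcal{B}$ by a $q$-receives-$m$ step.'' But in the case you are analyzing, a $q$-receives-$m$ step leads to a \emph{univalent} vertex by hypothesis, so that parenthetical is backwards; more importantly, the path may exit $\mathcal{B}$ through a univalent vertex $u$ reached by a step \emph{other} than $q$ receiving $m$, and then $u$ carries no colour and your opposite-coloured edge need not lie in $\mathcal{B}$. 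The repair is exactly what the paper's condition (*) provides: the ``not bivalent'' property holds for $q$-receives-$m$ extensions of \emph{all} descendants of $\sigma$ reached before $q$ consumes $m$, not just the bivalent ones, so the colour extends to such a $u$ (it inherits $u$'s valence), and the last vertex of $\mathcal{B}$ on the path together with $u$ already forms the hook. With that adjustment your argument closes and coincides with the paper's.
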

\begin{proof}
Let $i$ be a bivalent critical index.
In Figure~\ref{fig:CHT-gadget}, we present a procedure
	which goes through $\Upsilon^i$.
The algorithm starts from the bivalent root of $\Upsilon^i$ and
	terminates when a hook or a fork has been found.

\begin{figure}[htbp]
\hrule \vspace{2mm} {\small
\begin{tabbing}
 bbb\=bb\=bb\=bb\=bb\=bb\=bb\=bb \=  \kill
    \> $S\get S_{\bot}$\\
    \> \res{while} \id{true} \res{do}\\
    \>\> $p\get\langle$choose the next correct process in a round robin fashion$\rangle$\\
    \>\> $m\get\langle$choose the oldest undelivered message addressed to $p$ in $S(I^i)\rangle$\\
    \>\> \res{if} $\langle S$ has a descendant $S'$ in $\Upsilon^i$ (possibly $S=S'$) such that,
		for some $d$,\\
    \>\>\>\> $S'\cdot(p,m,d)$ is a bivalent vertex of $\Upsilon^i \rangle$ \\
    \>\> \res{then} $S\get S'\cdot (p,m,d)$\\
    \>\> \res{else} \res{exit}
 \end{tabbing}
\hrule }
  \caption{Locating a decision gadget}
  \label{fig:CHT-gadget}
\end{figure}

We show that the algorithm indeed terminates.
Suppose not.
Then the algorithm locates an infinite \emph{fair path} through the
	simulation tree, i.e., a path in which all correct processes get
	scheduled infinitely often and every message sent to a correct
	process is eventually consumed.
Additionally, this fair path goes through bivalent states only.
But no correct process can decide in a bivalent state $S(I^i)$
	(otherwise we would violate the Agreement property of consensus).
As a result, we constructed a run of $\A$ in which no correct process ever decides
	--- a contradiction.

Thus, the algorithm in Figure~\ref{fig:CHT-gadget} terminates.
That is, there exist a bivalent vertex $S$, a correct process $p$,
	and a message $m$ addressed to $p$ in $S(I^i)$ such that
	
\begin{description}
\item[(*)] For all descendants $S'$ of $S$ (including $S'=S$)
	and all $d$, $S'\cdot (p,m,d)$ is \emph{not}
	a bivalent vertex of $\Upsilon^i$.
\end{description}

In other words, any step of $p$ consuming message $m$ brings any descendant of $S$
	(including $S$ itself) to either a $1$-valent or a $0$-valent state.
Without loss of generality, assume that, for some $d$, $S\cdot(p,m,d)$ is
	a $0$-valent vertex of $\Upsilon^i$.
Since $S$ is bivalent, it must have a $1$-valent descendant $S''$.

If $S''$ includes a step in which $p$ consumes $m$, then we define $S'$
	as the vertex of $\Upsilon^i$ such that, for some $d'$, $S'\cdot (p,m,d')$ is a prefix of $S''$.
If $S''$ includes no step in which $p$ consumes $m$, then we define $S'=S''$.
Since $p$ is correct, for some $d'$, $S'\cdot (p,m,d')$ is a vertex of $\Upsilon^i$.
In both cases, we obtain $S'$ such that for some $d'$,
	$S'\cdot(p,m,d')$ is a $1$-valent vertex of $\Upsilon^i$.

Let the path from $S$ to $S'$ go through the vertices
	$\sigma_0=S,\sigma_1,\ldots,\sigma_{m-1},\sigma_m=S'$.
By transitivity of $G$, for all $k\in\{0,1,\ldots, m\}$,
	$\sigma_k\cdot (p,m,d')$ is a vertex of $\Upsilon^i$.
By (*), $\sigma_k\cdot(p,m,d')$ is either $0$-valent or $1$-valent
	vertex of $\Upsilon^i$.

\begin{figure}[htbp]
  \centering
  \includegraphics[scale=.87]{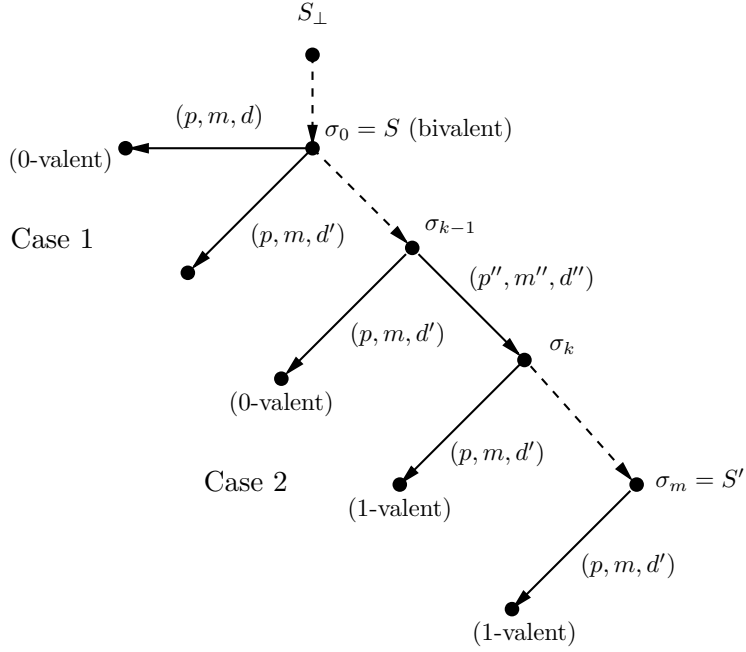}
  \caption{Locating a fork (Case~1) or a hook (Case~2)}
  \label{fig:CHT-path}
\end{figure}

Let $k\in\{0,\ldots,m\}$ be the lowest index such that $(p,m,d')$ brings $\sigma_k$
to a $1$-valent state. We know that such an index exists, since  $\sigma_m\cdot(p,m,d')$ is $1$-valent
and all such resulting states are either $0$-valent or $1$-valent.

Now we have the following two cases to consider: (1) $k=0$, and (2) $k>0$.

Assume that $k=0$, i.e., $(p,m,d')$ applied to $S$ brings it to a $1$-valent state.
But we know that there is a step $(p,m,d)$ that brings $S$ to a $0$-valent state
	(Case~1 in Figure~\ref{fig:CHT-path}).
That is, a fork is located!

If $k>0$, we have the following situation.
Step $(p,m,d')$ brings $\sigma_{k-1}$ to a $0$-valent state,
	and $\sigma_{k}=\sigma_{k-1}\cdot (p',m',d'')$ to a $1$-valent state
	(Case~2 in Figure~\ref{fig:CHT-path}).
But that is a hook!
				
As a result, any bivalent infinite simulation tree has at least one decision gadget.
\end{proof}

\subsection{The reduction algorithm}
\label{subsec:cht:reduction}

Now we are ready to complete the description of $T_{\D\to\Omega}$.
In the computation task (Figure~\ref{fig:CHT-comp}), every process $p$
	periodically extracts the current \emph{leader} from its
	simulation forest, so that eventually
	the correct processes agree on the same correct leader.
The current leader is stored in variable $\OO_p$.

\begin{figure}[htbp]
\hrule \vspace{2mm} {\small
\setcounter{linenumber}{0}
\begin{tabbing}
 bbb\=bb\=bb\=bb\=bb\=bb\=bb\=bb \=  \kill
\>Initially:\\
\>\> for $i=0,1,\ldots,n$: $\Upsilon_p^i \get $ empty graph\\
\>\> $\OO_p \get p$\\
\\
\> \textbf{while} true \textbf{do}\\
\\
\>\>\{ Build and tag the simulation forest induced by $G_p$ \}\\
\>\>	\textbf{for} $i=0,1,\ldots, n$	\textbf{do} \\
\>\>\> 		$\Upsilon_p^i \get$ simulation tree induced by $G_p$ and $I^i$\\
\>\>\>		\textbf{for} every vertex $S$ of $\Upsilon_p^i$:\\
\>\>\>\> 			\textbf{if} $S$ has a descendant $S'$ such that $p$ decides $v$
				in $S'(I^i)$ \textbf{then} \\
\>\>\>\>\>			add tag $v$ to $S$\\
\\
\>\>\{ Select a process from the tagged simulation forest \}\\
\>\>	\textbf{if} there is a critical index \textbf{then}\\
\>\>\> 		$i \get$ the smallest critical index\\
\>\>\>		\textbf{if} $i$ is univalent critical \textbf{then}  $\OO_p \get p_i$\\
\>\>\> 		\textbf{if} $\Upsilon_p^i$ has a decision gadget \textbf{then}\\
\>\>\>\>			$\OO_p \get$ the deciding process of the smallest
				decision gadget in  $\Upsilon_p^i$
\end{tabbing}
\hrule }
\caption{Extracting a correct leader: code for each process $p$}
\label{fig:CHT-comp}
\end{figure}

Initially, $p$ elects itself as a leader.
Periodically, $p$ updates its simulation forest $\Upsilon_p$ by incorporating more
	simulation stimuli from $G_p$.
If the forest has a univalent critical index $i$, then $p$ outputs $p_i$ as
	the current leader estimate.
If the forest has a bivalent critical index $i$ and $\Upsilon_p^i$ contains
	a decision gadget, then $p$ outputs the deciding process of \emph{the smallest} decision
	gadget in $\Upsilon_p^i$ (the ``smallest'' can be well-defined,
	since the vertices of the simulation tree are countable).

Eventually, the correct processes locate the same \emph{stable} critical index $i$.
Now we have two cases to consider:

\begin{enumerate}
\item[(i)] $i$ is univalent critical.
	By Lemma~\ref{lemma:cht-univalent}, $p_i$ is correct.
\item[(ii)] $i$ is bivalent critical.
	By Lemma~\ref{lemma:cht-bivalent}, the limit simulation tree $\Upsilon^i$
		 contains a decision gadget.
	Eventually, the correct processes locate the same decision gadget
		$\gamma$ in $\Upsilon_i$ and compute the deciding process $q$
		of $\gamma$.
	By Lemma~\ref{lemma:cht-deciding}, $q$ is correct.
\end{enumerate}

Thus, eventually, the correct processes elect the same correct leader
	--- $\Omega$ is emulated!

\end{document}